\DeclareMathAlphabet{\mathcal}{OMS}{cmsy}{m}{n}
\newcommand{\vs}{\vspace{1.5mm}}
\theoremstyle{plain} 
\newtheorem{theorem}{Theorem}[section]
\newtheorem{lemma}[theorem]{Lemma}
\theoremstyle{definition} 
\newtheorem{definition}{Definition}[section]
\newtheorem{example}{Example}
\theoremstyle{remark} 
\newcommand{\Z}{\mathbb{Z}}
\newcommand{\bits}{\{0,1\}}
\newcommand{\db}{\displaybreak[0]}
\def\BState{\State\hskip-\ALG@thistlm}
\begin{document}

%
%


\begin{titlepage}
\begin{center}
    \vspace*{2.0cm}
    \LARGE{Permutation Generators Based on Unbalanced Feistel Network:
    Analysis of the Conditions of Pseudorandomness}\footnote{
    Advisor: Kwang-Hyung Lee.
    This is an English translation of the MS Thesis written in Korean.}\\

    \vspace{0.8cm}
    \Large{Kwangsu Lee}

    \vspace{4.0cm}
    \large{A Thesis for the Degree of Master of Science}

    \vfill

    \large{
    Division of Computer Science,\\
    Department of Electrical Engineering and Computer Science,\\
    Korea Advanced Institute of Science and Technology}\\

    \vspace*{0.5cm}
    \large{February 2000}
    \vspace*{2.0cm}
\end{center}
\end{titlepage}


\chapter*{Abstract}

A block cipher is a bijective function that transforms a plaintext to a
ciphertext. A block cipher is a principle component in a cryptosystem because
the security of a cryptosystem depends on the security of a block cipher. A
Feistel network is the most widely used method to construct a block cipher.
This structure has a property such that it can transform a function to a
bijective function. But the previous Feistel network is unsuitable to
construct block ciphers that have large input-output size. One way to
construct block ciphers with large input-output size is to use an unbalanced
Feistel network that is the generalization of a previous Feistel network.
There have been little research on unbalanced Feistel networks and previous
work was about some particular structures of unbalanced Feistel networks. So
previous work didn't provide a theoretical base to construct block ciphers
that are secure and efficient using unbalanced Feistel networks.

In this thesis, we analyze the minimal number of rounds of pseudo-random
permutation generators that use unbalanced Feistel networks. That is, after
categorizing unbalanced Feistel networks as source-heavy structures and
target-heavy structures, we analyze the minimal number of rounds of
pseudo-random permutation generators that use each structure. Therefore, in
order to construct a block cipher that is secure and efficient using
unbalanced Feistel networks, we should follow the results of this thesis.
Additionally, we propose a new unbalanced Feistel network that has some
advantages such that it can extend a previous block cipher with small
input-output size to a new block cipher with large input-output size. We also
analyze the minimum number of rounds of a pseudo-random permutation generator
that uses this structure.


\tableofcontents

\chapter{Introduction}

A block cipher is a symmetric key cryptosystem that encrypts a plaintext into
a ciphertext block by block \cite{MenezesOV97}. The block cipher should have
the one-to-one correspondence between plaintexts and ciphertexts. The block
cipher is the most important element in most cryptographic systems. In
particular, it is the essential element used in the implementation of other
cryptographic primitives such as pseudo-random number generators, stream
ciphers, and hash functions \cite{MenezesOV97,Schneier94}. Since the security
of most cryptographic systems depends on the security of block ciphers, a
secure block cipher must be implemented to build a secure cryptosystem. A
secure block cipher is a block cipher where the output value of the block
cipher becomes a random value. That is, when the output value of the block
cipher is random, the block cipher becomes a secure one since it is very hard
for an attacker to guess the plaintext or the secret key of the cipher from
the ciphertext. Mathematically, a block cipher is a secure block cipher if it
is a pseudo-random permutation (PRP) generator \cite{LubyR88}.

The biggest problem of implementing a block cipher is that it is difficult to
implement a function that has the one-to-one correspondence property and the
output randomness property at the same time. The way to solve this problem is
to use a Feistel network structure. A Feistel network structure is a method
that converts an arbitrary function into a one-to-one correspondence
function. This structure was designed by H. Feistel in designing the Lucifer
cipher \cite{Feistel73,FeistelNS75}. That is, if a block cipher is
implemented using a Feistel network, a secure block cipher can be easily
implemented by simply implementing an arbitrary function whose output value
is random. This is because if the output value of an arbitrary function is
random, the Feistel network structure automatically converts it to a
one-to-one correspondence function.

Therefore, most block ciphers are constructed by using a Feistel network
structure or a slightly modified structure of the Feistel network structure
\cite{DES77,ShimizuM88,Schneier94,Rivest95}. In addition, some research has
been done on the security of block ciphers using the Feistel network
structure \cite{BihamS91,MenezesOV97,Matsui94,SadeghiyanP92}. In particular,
there have been numerous studies on pseudo-random permutation generators
after the work of Luby and Rackoff \cite{LubyR88,Schnorr88,ZhengMI90i,
Pieprzyk91,SadeghiyanP91,SadeghiyanP92,Patarin92n,Patarin92h,AiolloV96,
Coppersmith96,Patarin98}. However, the problem of the previous (balanced)
Feistel network is that it is difficult to construct a block cipher with a
large input/output size. That is, when a block cipher that can process a
large size of data at one time is implemented, the input size of a round
function used in the Feistel network structure also increases as the
input/output size of the block cipher increases. In practice, however, the
cost of implementing a round function is proportional to the input size of
the round function. Therefore, the previous (balanced) Feistel network
structure is inappropriate when constructing a block cipher with large
input/output size.

One of the ways to solve this problem is to use an unbalanced Feistel network
structure which is a modification of the previous balanced Feistel network.
An unbalanced Feistel network is a Feistel network structure in which the
size of a target-block combined with the output of a round function and the
size of a source-block which is the input of the round function are different
\cite{AndersonB96,Lucks96,SchneierK96}. Therefore, the round function in an
unbalanced Feistel network structure can be implemented at a lower cost than
a balanced Feistel network since it is possible to control the input size of
the round function in the unbalanced Feistel network structure. Especially,
as the information processing capability of the computer increases, a block
cipher with a large input-output size that is capable of processing a large
amount of information will be needed. Therefore, much research is needed on
unbalanced Feistel networks suitable for implementing block ciphers with
large input-output size. However, there are not many studies on unbalanced
Feistel networks.

There are some studies to implement a pseudo-random permutation generator,
which is a secure block cipher using an unbalanced Feistel network
\cite{Jutla98,NaorR99}. However, these studies failed to show the minimum
number of rounds for a block cipher using an unbalanced Feistel network to
become a secure and efficient block cipher. In fact, the minimum number of
rounds is important when constructing a block cipher because the number of
rounds greatly affects the speed and cost of the block cipher. That is, when
constructing a block cipher, a small number of rounds must be used to
implement a fast block cipher at low cost. It is therefore very important to
determine the minimum number of rounds to be a secure block cipher.

Therefore, in this thesis, we first find the minimum number of rounds for a
pseudo-random permutation generator, which is a secure block cipher using a
Feistel network. Next, we propose a scalable new unbalanced Feistel network
structure and find the minimum number of rounds for this structure to become
a secure block cipher. The advantage of a newly proposed structure is that it
can easily construct a block cipher with a large input-output size using a
previously designed secure block cipher with fixed input-output size. In
other words, there is a lot of analysis and research on a new block cipher in
order to newly design a block cipher with a large input-output size. However,
by using this newly proposed structure, a new block cipher with a large
input-output size can be implemented using the previously analyzed secure
block cipher. So we do not have to do another analysis and research.

The structure of this thesis is as follows. In Chapter 2, we first define a
block cipher, a Feistel network, and a pseudo-random number generator. Then
we summarize the existing studies on pseudo-random permutation generators. In
Chapter 3, we investigate the condition of the number of rounds for a block
cipher using an unbalanced Feistel network to be a pseudo-random permutation
generator. In Chapter 4, we propose a scalable unbalanced Feistel network
structure, and analyze the conditions for a secure block cipher using this
new structure to be a pseudo-random permutation generator. In Chapter 5, we
compare a balanced Feistel network structure, an unbalanced Feistel network
structure, and the newly proposed structure. Finally, in Chapter 6, we
conclude the thesis and present the direction of future research.

\chapter{Preliminaries}

In this chapter, we define the terms used in the thesis and summarize the
studies related to a pseudo-random permutation generator. A block cipher is a
secret-key cryptosystem that processes messages using the same key when
encrypting and decrypting messages. A block cipher is the basis of
cryptographic systems for message authentication, data integrity
verification, and digital signature. Mathematically, a block cipher is a
one-to-one function (permutation) since it must be able to encrypt and
decrypt messages using a secret key. A Feistel network structure is most
commonly used to build block ciphers because it has the advantage of
converting arbitrary functions to permutations. In order for a block cipher
to be a secure block cipher, the output value of the block cipher must be a
random value. That is, when a block cipher becomes a pseudo-random
permutation generator, it becomes a secure block cipher. In this case, the
permutation generator $P$ is pseudo-random, meaning that any efficient
algorithm can not distinguish between an ideal permutation generator and $P$.

This chapter is organized as follows. In Section 2.1, we first define symbols
used in this paper. In Section 2.2, we define a block cipher, which is a
secret-key cryptosystem, and investigate the characteristics of the block
cipher and attack methods for block ciphers. In Section 2.3, we define the
most commonly used Feistel networks for building block ciphers and discuss
the advantages and disadvantages of them. In Section 2.4, we define the
pseudorandomness. In Section 2.5, we finally summarize existing studies on
pseudo-random permutation generators

\section{Notation}

The symbols used in this paper are defined as follows.
\begin{itemize}
\item $I_n$ represents a set of all $n$-bit strings. That is, $\bits^n$.

\item $F: I_s \rightarrow I_t$ is a set of all functions whose inputs are
    $s$-bits and whose outputs are $t$-bits.

\item $F_n$ is a set of functions whose input and output are both $n$-bits
    in size. That is, $F_n$ is $F: I_n \rightarrow I_n$.

\item $P_n$ is a set of permutations (one-to-one functions) whose input and
    output sizes are both $n$-bits. That is, $P_n \subset F_n$

\item $|x|$ is the length of a bit string $x$. That is, $|x|$ is $n$ if the
    bit size of $x$ is $n$-bits.

\item $x \oplus y$ is an exclusive OR (XOR) per bit unit when the bit size
    of $x$ and $y$ is equal.

\item $x \| y$ is a concatenation of two bit strings $x$ and $y$. In this
    case, we have $| x \| y | = |x| + |y|$.

\item $f \circ g$ is the composition of two functions $f$ and $g$ when $f$
    and $g$ are elements of the set $F_n$. That is, $f \circ g(x) =
    f(g(x))$.
\end{itemize}

\section{Block Cipher}

\begin{figure}[t]
\centering
\includegraphics[scale=0.75]{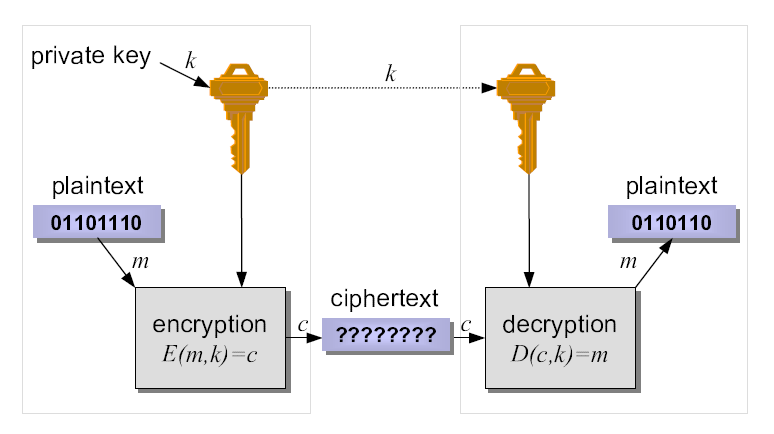}
\caption{The overview of a secret-key cryptosystem}
\label{fig:block_cipher}
\end{figure}

A block cipher can usually be a private-key cryptosystem or a public-key
cryptosystem. However, in this paper, the secret-key cryptosystem is called a
block cipher. The structure of a block cipher is given in Figure
\ref{fig:block_cipher}.

A block cipher is a function that sends an $n$-bit plaintext to an $n$-bit
ciphertext \cite{MenezesOV97,Shannon49}. The function of the block cipher is
specified by an $\ell$-bit secret key. If a plaintext is encrypted and
decrypted again, the original plaintext must be obtained. Therefore, the
block cipher must be a one-to-one function (bijection) for $n$-bit plaintexts
and $n$-bit ciphertexts when a secret key is specified. That is, each secret
key defines a different permutation. The definition of the block cipher is as
follows.

\begin{definition}[Block Cipher] Let $K$ be the set of $\ell$-bit secret keys. The
$n$-bit block cipher is defined as a function $E : I_n \times K \rightarrow
I_n$. We have that $E^{-1}(E(p,k), k) = p$ holds for an arbitrary plaintext
$p$ and a random key $k$ ($k \in K$).
\end{definition}

A true random block cipher is a block cipher that generates all permutations
between the domain and the range \cite{MenezesOV97}. In the $n$-bit block
cipher, the domain corresponds to the set of plaintexts and the number of
plaintexts is $2^n$, so the size of the domain is $2^n$. The range
corresponds to the set of ciphertexts and the number of ciphertexts is $2^n$,
so the size of the range is also $2^n$. Therefore, the number of all
permutations is $2^n!$. In the $n$-bit block cipher, one key specifies one
permutation, so we need $2^n !$ number of secret keys to enumerate all
permutations. That is, the size of a secret key of the ideal block cipher
must be $\log (2^n !) \approx (n - 1.44) 2^n$ bits.

\begin{definition}[Ideal Block Cipher] An ideal block cipher is a block
cipher that implements all $2^n !$ number of one-to-one functions that exist
between $2^n$ number of elements. In this case, each secret key specifies
each one-to-one function.
\end{definition}

It is impossible to actually build an ideal block cipher because it requires
$(n-1.44) 2^n$ bits for a secret key. Thus, in order for a block cipher using
an $\ell$-bit secret key to be secure, the permutations that are specified by
$\ell$-bit secret keys must appear randomly chosen from all $2^n !$
permutations.

The security of a block cipher is measured by the security against the
various attack methods of attackers. The attack on the block cipher is
divided into four categories according to the information that the attacker
can access:
\begin{enumerate}
\item Ciphertext-only attack: The attacker uses only ciphertexts to get the
    secret key of the block cipher.

\item Known-plaintext attack: The attacker uses known plaintexts and
    ciphertexts pairs to find the secret key of the block cipher.

\item Chosen-plaintext attack: The attacker finds the secret key of the
    block cipher by using the pairs of plaintexts and corresponding
    ciphertexts chosen by the attacker.

\item Chosen-ciphertext attack: The attacker finds the secret key using the
    chosen ciphertexts and its corresponding plaintexts.
\end{enumerate}

Differential cryptanalysis and linear cryptanalysis are the most powerful
methods of attacking block ciphers. Differential cryptanalysis is a
chosen-plaintext attack developed by Biham and Shamir \cite{BihamS91}. This
attack method exploits the fact that the probability distribution of the
difference between the input/output pair of a nonlinear function is not
uniform. Linear cryptanalysis is a known-plaintext attack developed by Matsui
\cite{Matsui94}. This attack method extracts the information of a related key
by using a linear approximation of a nonlinear function.

\section{Feistel Network}

A Feistel network is the most commonly used structure for designing a block
cipher. This structure was first used when designing a Lucifer cipher by H.
Feistel \cite{Feistel73,FeistelNS75}. After that, this was used to design
block ciphers such as DES, FEAL, Blowfish, and RC5 \cite{DES77,ShimizuM88,
Schneier94,Rivest95}.

A Feistel network is a method that converts an arbitrary function to a
permutation that is a one-to-one correspondence function. The definition of a
Feistel network is as follows.

\begin{definition}[Feistel Network] \label{def:feistel_network}
For any function $f$ belonging to $F:I_s \rightarrow I_t$, one round Feistel
network is defined as a function $D_f (L \| R) = (R \| L \oplus f(R))$.
Similarly, for the functions $f_1, f_2, \ldots, f_r$, which belong to the set
$F:I_s \rightarrow I_t$, an $r$ rounds Feistel network is defined as a
function $D_{f_r} \circ \cdots \circ D_{f_2} \circ D_{f_1} (L \| R)
\stackrel{def}{=} D_{f_r} \circ \cdots \circ D_{f_2} (D_{f_1} (L \| R))$,
where $|L| = t$, $|R| = s$, $|L| + |R| = n$, and $D_f \in P_n$.
\end{definition}

The above Feistel network can be seen as a permutation. To show that a
function is a one-to-one correspondence function (bijection), we should show
that it is a one-to-one function and an onto function. However, we only need
to show that it is a one-to-one function since the input and output bits of
the Feistel network are the same.

\begin{theorem} \label{thm:1_round_feistel_network}
The function $D_f (L \| R) = (R \| L \oplus f(R))$ is a one-to-one
function.
\end{theorem}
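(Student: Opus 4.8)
The plan is to show injectivity directly by exhibiting an inverse, or equivalently by showing that $D_f$ preserves distinctness of inputs. Since the map sends an $n$-bit string to an $n$-bit string, establishing injectivity suffices (as the preceding discussion notes, one-to-one implies onto here). I would first fix the input decomposition: any input to $D_f$ is written uniquely as $L \| R$ with $|L| = t$ and $|R| = s$, and the output is $R \| (L \oplus f(R))$, where the first block has size $s$ and the second has size $t$.

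First I would assume $D_f(L \| R) = D_f(L' \| R')$ for two inputs and derive $L \| R = L' \| R'$. Writing out both sides, the equality $R \| (L \oplus f(R)) = R' \| (L' \oplus f(R'))$ splits, by matching the first $s$ bits and the last $t$ bits, into the two component equations $R = R'$ and $L \oplus f(R) = L' \oplus f(R')$. From the first equation I immediately get $R = R'$, hence $f(R) = f(R')$. Substituting into the second equation gives $L \oplus f(R) = L' \oplus f(R)$, and XORing both sides with $f(R)$ (using that $\oplus$ is its own inverse, i.e.\ $a \oplus a = 0$ for the $t$-bit string $f(R)$) yields $L = L'$. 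Therefore $L \| R = L' \| R'$, which is exactly injectivity.

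Alternatively, and perhaps more cleanly, I would construct an explicit inverse map $D_f^{-1}(A \| B) = (B \oplus f(A)) \| A$, where $|A| = s$ and $|B| = t$, and verify the composition identities $D_f^{-1}(D_f(L \| R)) = L \| R$ and $D_f(D_f^{-1}(A \| B)) = A \| B$ by direct substitution, again relying on the involutive property of XOR. Exhibiting a two-sided inverse simultaneously establishes that $D_f$ is a bijection, which is slightly stronger than the stated claim but costs no extra work.

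I do not anticipate a genuine obstacle here; the argument is essentially bookkeeping. The only point requiring care is tracking the block sizes correctly: the input splits as $t + s$ bits while the output splits as $s + t$ bits, so when matching components one must align the $s$-bit first output block with $R$ (the $s$-bit input block) and the $t$-bit second output block with $L \oplus f(R)$ (a $t$-bit string, since $f(R) \in I_t$). Keeping these dimensions consistent is what guarantees the XOR operations are well-defined, and it is the one place where a careless reader might slip.
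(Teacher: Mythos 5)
Your proof is correct and is essentially the paper's argument in contrapositive form: the paper splits into the cases $R_1 \neq R_2$ and $R_1 = R_2, L_1 \neq L_2$ to show distinct inputs yield distinct outputs, while you assume equal outputs and cancel $f(R)$ to recover equal inputs — the same two observations in the opposite logical direction. Your optional explicit inverse $D_f^{-1}(A \| B) = (B \oplus f(A)) \| A$ is a harmless strengthening the paper does not bother with.
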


\begin{proof}
If the function $D_f$ is a one-to-one function, then $D_f (x) \neq D_f (y)$
for $x$ and $y$ such that $x \neq y$. If $x = (L_1 \| R_1)$ and $y = (L_2 \|
R_2)$, then $D_f (x) = (R_1 \| L_1 \oplus f(R_1))$ and $D_f(y) = (R_2 \| L_2
\oplus f(R_2))$. Because of $x \neq y$, we consider two cases.
\begin{itemize}
\item Case $R_1 \neq R_2$: $D_f (x) \neq D_f(y)$ by the definition of
    $D_f$.
\item Case $L_1 \neq L_2$ and $R_1 = R_2$: $L_1 \oplus f(R_1) \neq L_2
    \oplus f(R_2)$ since $f(R_1) = f(R_2)$.
\end{itemize}
So the function is a one-to-one function.
\end{proof}

\begin{figure}[t]
\centering
\includegraphics[scale=0.75]{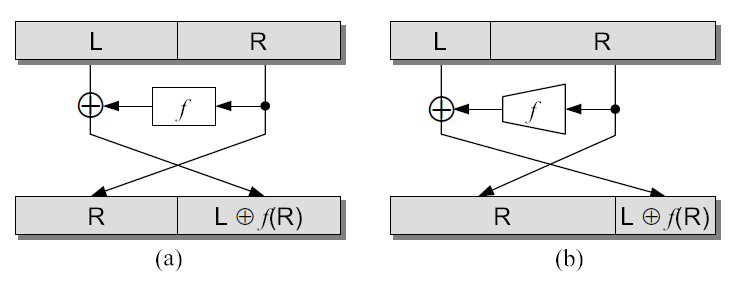}
\caption{Feistel networks: (a) Balanced Feistel network,
(b) Unbalanced Feistel network}
\label{fig:feistel_network}
\end{figure}

A Feistel network is divided into a balanced Feistel network and an
unbalanced Feistel network \cite{SchneierK96}. A balanced Feistel network is
a Feistel network in Definition \ref{def:feistel_network} with the same $L$
and $R$ sizes. In contrast, an unbalanced Feistel network is a Feistel
network in Definition \ref{def:feistel_network} with different $L$ and $R$
sizes ($|L| \neq |R|$). The balanced Feistel network and unbalanced Feistel
network structures are given in Figure \ref{fig:feistel_network}.

The DES algorithm that uses a Feistel network was invented in 1970s and it
has been used as the standard block cipher for 20 years \cite{DES77}. Many
researchers have studied the security of the DES algorithm \cite{BihamS91,
Matsui94,Schneier96}. In addition, many other block ciphers that were
invented after DES were also affected by the DES cipher.

\begin{example}[DES]
The DES algorithm is a 64-bit block cipher with a 56-bit secret key. This
cipher has a 16 rounds balanced Feistel network structure. The $i$th round is
defined as
    $$D_{K_i}(L^{i-1} \| R^{i-1})
    \stackrel{def}{=} (R^{i-1} \| L^{i-1} \oplus P(S(E(R) \oplus K_i)))$$
where $|L^{i-1}| = |R^{i-1}| = 32$, $|K_i| = 48$, $E:I_{32} \rightarrow
I_{48}$ is an expansion function, $S:I_{48} \rightarrow I_{32}$ is a
substitution function, and $P:I_{32} \rightarrow I_{32}$ is a permutation
function.
\end{example}

An unbalanced Feistel network is divided into a source-heavy unbalanced
Feistel network and a target-heavy unbalanced Feistel network
\cite{SchneierK96}. The source-heavy unbalanced Feistel network is a Feistel
network where the size of the block $R$ which is the input to the $F$
function, is greater than the size of the block $L$ which is combined with
the output of the $F$ function ($|R| > |L|$). Whereas the target-heavy
Feistel network is a Feistel network where the size of the block $R$ is
smaller than the size of the block $L$ ($|R| < |L|$). The structures of
source-heavy and target-heavy Feistel networks are given in Figure
\ref{fig:unbalanced_feistel_network}.

\begin{figure}
\centering
\includegraphics[scale=0.75]{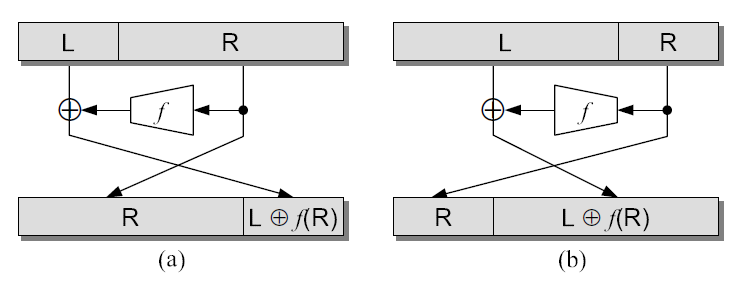}
\caption{Unbalanced Feistel networks: (a) Source-heavy Feistel network,
(b) Target-heavy Feistel network}
\label{fig:unbalanced_feistel_network}
\end{figure}

The MARS algorithm is a block cipher proposed for the advanced encryption
standard (AES) to replace the DES block cipher \cite{BurwickCD+98}. This
cipher uses a target-heavy unbalanced Feistel network.

\begin{example}[MARS]
The MARS algorithm is a 128-bit block cipher proposed for AES and it can have
different size of secret keys. This cipher has a 32 rounds unbalanced Feistel
network. The $i$th round is defined as
    $$D_{K_i}(L_1^{i-1} \| L_2^{i-1} \| L_3^{i-1} \| R^{i-1})
    \stackrel{def}{=} (R^{i-1} \| L_1^{i-1} \boxplus f_1^i(R^{i-1}) \|
    L_2^{i-1} \boxplus f_2^i(R^{i-1}) \| L_3^{i-1} \boxplus f_3^i(R^{i-1}))$$
where $\boxplus$ is the addition operator in mod $2^{32}$.
\end{example}

The main advantage of a block cipher using an unbalanced Feistel network is
that it can select the input size of the $F$ function used in the Feistel
network. In particular, as the amount of data that the computer has to
process due to the development of the Internet, the amount of messages to be
encrypted is also increasing. As the information throughput increases, an
encryption scheme that can process large amounts of data becomes necessary.
Therefore, a block cipher with a large input size is needed. The advanced
encryption standard (AES) selected by National Institute of Standards and
Technology (NIST) also requires a 128-bit block cipher to reflect this demand
\cite{NIST97}.

When implementing a block cipher with a large input value, it is difficult to
implement the $F$ function of a Feistel network if a balanced Feistel network
is used. This is because the cost of implementing the $F$ function is
generally proportional to the size of the input value of the $F$ function.
However, since the input size of the $F$ function can be selected in an
unbalanced Feistel network, it is very effective to use an unbalanced Feistel
network in implementing a large block cipher. In other words, a block cipher
with a large input size using a target-heavy unbalanced Feistel network can
be implemented at a lower cost than a cipher using other Feistel network
structures.

\section{Pseudo-Randomness}

Before defining pseudo-randomness, we first review what two distributions are
computationally equivalent. The computational equivalence of two
distributions means that no effective algorithm can determine that two
distributions are different. In other words, computational
indistinguishability is a criterion for judging the equivalence of two
distributions. Therefore, the pseudo-randomness distribution is a
distribution that can not be distinguished from the truly random distribution
by computation \cite{Goldreich95}.

At this time, it is necessary to define an algorithm for determining whether
two distributions are identical. The definition of an algorithm for
determining identical distributions is defined by the following oracle
machine.

\begin{definition}[Oracle Machine $M$]
An oracle machine $M$ is a Turing machine that has an oracle tape as an
additional tape and has two special states called ``oracle invocation'' and
``oracle appeared''. The oracle machine has an input value of $1^n$ and an
output value of 1. The oracle machine that can access a function $f$ whose
input value is $n$ bits in size is called $M^{f} (1^n)$ and operates as
follows: If the state of the oracle machine is not ``oracle appeared'', then
it operates as the same as a normal Turing machine. If the state of the
oracle machine is ``oracle origin'', then the oracle machine writes an oracle
query $x_1$ which is $n$ bits string to the oracle tape. Then, the state of
the oracle machine is changed to ``oracle appeared'', and the content of the
oracle tape is replaced by the oracle reply $y_1 = f(x_1)$. This process
repeats $m$ times. The 1-bit output of the oracle machine is calculated from
the values $<x_1, y_1>, <x_2, y_2>, \ldots, <x_m, y_m>$.
\end{definition}

Because the oracle machine determines the equality of the two distributions,
the pseudo-randomness is determined by the computational power of the oracle
machine. An effective algorithm for determining the equality of two
distributions is an oracle machine that calculates the output value within a
probabilistic polynomial-time. Thus, the pseudo-random distribution refer to
a distribution that can not be distinguished from the true random
distribution using probabilistic polynomial-time oracle machines. At this
time, the indistinguishability is defined as the following.

\begin{definition}[Polynomial-Time Indistinguishability] \label{def:ind}
If two distributions $X$ and $Y$ are indistinguishable in polynomial-time,
then the following equation holds for all probabilistic polynomial-time
algorithms $D$, all polynomials $p(\cdot)$, and sufficiently large $n$ values
    ​​$$| \Pr (D(X,1^n) = 1) - \Pr (D(Y,1^n) = 1) | <1/p(n)$$
where the output of algorithm $D$ is 1 bit.
\end{definition}

A cryptographically secure pseudo-random bit generator was first introduced
by Blum and Micali \cite{BlumM84}. After that, a number of pseudo-random bit
generators have been proposed based on number theoretic problems
\cite{BlumBS86,Kaliski87,VaziraniV85}. H{\aa}stard et al. have shown that a
pseudo-random bit generator can be constructed using an arbitrary one-way
function \cite{HastadILL99}. A pseudo random bit generator is defined as
follows. In this case, an ideal random bit generator $R$ has a uniform
distribution of all possible output values.

\begin{definition}[Pseudo-Random Bit Generator]
A pseudo-random bit generator is defined by a deterministic polynomial-time
algorithm $G$ and satisfies the following two conditions:
\begin{enumerate}
\item Scalability: $|G(s)| > |s|$  for all $s \in \bits^*$.
\item Pseudo-Randomness: The algorithm $G$ is indistinguishable from the
    ideal random bit generator $R$ in polynomial time.
\end{enumerate}
\end{definition}

Blum and Micali showed that it is possible to construct a pseudo-random bit
generator using the difficulties of the discrete logarithm problem
\cite{BlumM84}. The pseudo-random bit generator of Blum and Micali is
described in Example \ref{exa:bm_prbg}.

\begin{example}[Blum-Micali Pseudo-Random Bit Generator] \label{exa:bm_prbg}
Let $p$ be a large prime, and $g$ be a generator of $\Z_p^*$. The set $D$ is
defined as $D = \Z_p^* = {0, 1, \ldots, p-1}$. The function $f:D \rightarrow
D$ is defined as $f(x) = g^x \mod p$. The function $B:D \rightarrow \bits$ is
defined as $B(x)=1$ for $0 \leq \log_g(x) \leq (p-1)/2$ and $B(x)=0$ for
$\log_g(x) > (p-1)/2$. The Blum-Micali pseudo-random bit generation algorithm
is described as follows.
\vs \\
\indent \quad \textbf{generate} a large prime $p$ and a generator $g$ of $\Z_p^*$. \\
\indent \quad \textbf{select} a random integer $x_0$ from $D = \{ 0, 1, \ldots, p-1 \}$. \\
\indent \quad \textbf{for} $1 \leq i \leq \ell$ \textbf{do} \\
\indent \qquad      $x_i \leftarrow f(x_{i-1})$. \\
\indent \qquad      $b_i \leftarrow B(x_i)$. \\
\indent \quad \textbf{end for} \\
\indent \quad \textbf{output} $b_1, b_2, \ldots, b_\ell$.
\end{example}

Blum, Blum, and Shub showed that it is possible to construct a pseudo-random
bit generator using the difficulties of the quadratic residuacity problem
\cite{BlumBS86}. The pseudo-random bit generator of Blum, Blum, and Shub is
described in Example \ref{exa:bbs_prbg}.

\begin{example}[Blum-Blum-Shub Pseudo-Random Bit Generator] \label{exa:bbs_prbg}
Let $LSB(x)$ be a function that outputs the least significant bit of a binary
string $x$. The Blum-Blum-Shub pseudo-random bit generation algorithm is
described as follows.
\vs \\
\indent \quad \textbf{generate} a large prime $p$ such that $p \mod 4 = 3$. \\
\indent \quad \textbf{generate} a large prime $q$ such that $q \mod 4 = 3$. \\
\indent \quad $n \leftarrow pq$. \\
\indent \quad \textbf{select} a random integer $s \in \{ 1, \ldots, n-1 \}$
               such that $gcd(s,n) = 1$. \\
\indent \quad $x_0 \leftarrow s^2 \mod n$. \\
\indent \quad \textbf{for} $1 \leq i \leq \ell$ \textbf{do} \\
\indent \qquad      $x_i \leftarrow x_{i-1}^2 \mod n$. \\
\indent \qquad      $b_i \leftarrow LSB(x_i)$. \\
\indent \quad \textbf{end for} \\
\indent \quad \textbf{output} $b_1, b_2, \ldots, b_\ell$.
\end{example}

A pseudo-random function generator is a function generator that is
indistinguishable from an ideal random function generator $H$ that generates
all possible functions with a uniform probability distribution. Goldreich,
Goldwasser, and Micali showed that a pseudo-random bit generator can be used
to create a pseudo-random function generator \cite{GoldreichGM86}. A
pseudo-random function generator is defined as follows.

\begin{definition}[Pseudo-Random Function Generator] \label{def:prf}
A pseudo-random function generator is defined as an algorithm $F$ that
generates a set of functions. For all probabilistic polynomial-time oracle
$M$, all polynomials $p(\cdot)$, and sufficiently large $n$ values, it
satisfies
    $$| \Pr(M^F (1^n) = 1) - \Pr(M^H (1^n) = 1) | < 1/p(n)$$
where $H$ is an ideal random function generator that generates all possible
functions as a uniform probability distribution.
\end{definition}

Goldreich, Goldwasser and Micali proposed a pseudo-random function generator
as follows.

\begin{example}[GGM Pseudo-Random Function Generator] \label{exm:ggm-prf}
Let $G$ be a pseudo random bit generator whose input is $k$ bits and whose
output is $2k$ bits. That is, $G(x) = b_1^x b_2^x \cdots b_{2k}^x$ for the
initial value $x \in I_k$. Let $G_0(x)$ be the first $k$ bit string of $G(x)$
and $G_1(x)$ be the remaining $k$ bit string of $G(x)$. That is, $G_0(x) =
b_1^x \cdots b_k^x$ and $G_1(x) = b_{k+1}^x \cdots b_{2k}^x$. For a $t$-bit
binary string $\alpha = \alpha_1 \alpha_2 \cdots \alpha_t$, $G_{\alpha}(x)$
is defined as $G_{\alpha_t} (\cdots (G_{\alpha_2} (G_{\alpha_1} (x))))$. For
a given $x \in I_k$, a pseudo-random function $f_x: I_k \rightarrow I_k$ is
defined as $$f_x(y) \stackrel{def}{=} G_y(x)$$ where $y \in I_k$. For
polynomials $P_1$ and $P_2$ and given $x \in I_k$, a pseudo-random function
$F : I_{P_1(k)} \rightarrow l_{P_2(k)}$ is defined as $$F_x(y)
\stackrel{def}{=} G'(G_y(x))$$ where $y \in I_{P_1(k)}$ and $G'$ is a
pseudo-random bit generator with $k$ bits input and $P_2(k)$ bits output.
\end{example}

Let us look at the performance of the GGM pseudo-random function generator. A
pseudo-random function generator $F_x : I_{P_1(k)} \rightarrow I_{P_2(k)}$
must generate a pseudo-random string of approximately $P_1(k) \cdot 2k +
P_2(k)$ bits. Assuming that the size $k$ of the string $x$ is properly
selected and that $P_2(k)$ is not a value much larger than $P_1(k)$, then the
size of pseudo-random bits that the pseudo-random function $F_x$ must
generate is proportional to the input bit size $P_1(k)$. So the smaller the
input bit size of the function, the faster the function can be generated.

A pseudo-random permutation generator was introduced by Luby and Rackoff
\cite{LubyR88}. The definition of pseudo-random permutation generator is
given as follows.

\begin{definition}[Pseudo-Random Permutation Generator] \label{def:prp}
A pseudo-random permutation generator is defined as an algorithm $P$ which
generates a set of permutations. For all probabilistic polynomial-time oracle
$M$, all polynomials $p(\cdot)$, and sufficiently large $n$ values, it
satisfies the following equation
    $$| \Pr(M^P(1^n) = 1) - \Pr(M^H(1^n) = 1) | < 1/p(n)$$
where $H$ is an ideal random permutation generator that produces all possible
permutations with a uniform probability distribution.
\end{definition}

Luby and Rackoff showed that a pseudo-random permutation generator could be
constructed by using a pseudo-random function and a three rounds balanced
Feistel network structure.

\begin{example}
A three rounds balanced Feistel network that uses pseudo-random functions
$f_1, f_2, f_3$ is a pseudo-random permutation generator and is defined as
    $$D_{f_3} \circ D_{f_2} \circ D_{f_1} (L \| R)
    \stackrel{def}{=} D_{f_3} (D_{f_2} (D_{f_1} (L \| R)))$$
where $D_{f_i} (L \| R) = (R \| L \oplus f_i(R))$ and $|L| = |R|$.
\end{example}

A super pseudo-random permutation generator is a permutation generator, which
can not be distinguished from an ideal random permutation generator when an
oracle machine is able to access both a permutation and the inverse of the
permutation. The definition is given as follows.

\begin{definition}[Super Pseudo-Random Permutation Generator]
\label{def:sprp} A super pseudo-random permutation generator is defined as an
algorithm $P$ that generates the set of permutations. For all probabilistic
polynomial-time oracle $M$, all polynomials $p(\cdot)$, and the large $n$
value, it satisfies the following equation
    $$| \Pr (M^{P, P^{-1}} (1^n) = 1) - \Pr (M^{H, H^{-1}} (1^n) = 1) |
    < 1/p(n) |$$
where $P^{-1}$ is the inverse of the permutation $P$, $H$ is an ideal random
permutation generator that produces a permutation with a uniform probability
distribution, and $H^{-1}$ is the inverse of $H$.
\end{definition}

Luby and Rackoff show that it is possible to build a super pseudo-random
permutation generator by using a pseudo-random function and a four rounds
balanced Feistel network structure.

\begin{example}
A four rounds balanced Feistel network that uses pseudo-random functions
$f_1, f_2, f_3, f_4$ is a super pseudo-random permutation generator and is
defined as
    $$D_{f_4} \circ D_{f_3} \circ D_{f_2} \circ D_{f_1} (L \| R)
    \stackrel{def}{=} D_{f_4} (D_{f_3} (D_{f_2} (D_{f_1} (L \| R))))$$
where $D_{f_i} (L \| R) = (R \| L \oplus f_i(R))$ and $|L| = |R|$.
\end{example}

If a block cipher is a pseudo-random permutation generator, then it becomes a
secure block cipher for a chosen-plaintext attack. If a block cipher is a
super pseudo-random permutation generator, then it becomes a secure block
cipher for a chosen-plaintext attack and a chosen-ciphertext attack.

\section{Pseudo-Random Permutation Generator}

The research on pseudo-random permutation generators has been started by Luby
and Rackoff. They have proved that a three rounds balanced Feistel network
that uses pseudo-random functions is a pseudo-random permutation generator
\cite{LubyR88}. The proof of pseudo-random permutation is largely divided
into two parts. First, the proof show that a three rounds Feistel network
that uses ideal random functions becomes a pseudo-random permutation
generator. Next, the proof show that that a three rounds Feistel network that
uses pseudo-random functions is pseudo-random by using contradiction. The
proof that shows the pseudo-randomness of the three rounds Feistel network
that uses ideal random functions looks like this. First, we define $BAD$ as
an event that a machine can distinguish a balanced Feistel permutation
generator $P$ from an ideal permutation generator $K$. Next the proof show
that $P$ is the same as $K$ if the $BAD$ event does not occur, and it also
show that the probability of the $BAD$ event is very low. To prove that a
three rounds Feistel network that uses pseudo-random functions is
pseudo-random by using contradiction. In other words, if the three rounds
Feistel network that uses ideal random functions is pseudo-random, but the
three rounds Feistel network that uses pseudo-random functions is not
pseudo-random, then it is possible to derive a contradiction on the
assumption that pseudo-random functions are pseudo-random.

There have been some studies to simply prove the pseudo-randomness of
permutation generators since the work of Luby and Rackoff. Maurer used a
local random function instead of a pseudo-random function to show that a
three rounds Feistel network is a pseudo-random permutation generator
\cite{Maurer92}. Naor and Reingold have proved that a three rounds structure
that uses a pairwise independent permutation and a two rounds Feistel network
is pseudo-random \cite{NaorR99}.

After the work of Luby and Rackoff, much research focused to build
pseudo-random permutation generators by using balanced Feistel networks and
pseudo-random functions with small number of rounds. That is, if we use
pseudo-random functions with small number of rounds, then the size of keys
used in the permutation can be decreased since the size of keys for
pseudo-random functions is large. Let $f, g, h, e$ be pseudo-random
functions, and $f^i$ be the composition of the function $f$ with $i$ times.
The results are summarized as follows.
\begin{itemize}
\item $D_f \circ D_f \circ D_f$ and $D_f \circ D_g \circ D_f$ are not
    pseudo-random permutations \cite{Rueppel90}.

\item $D_g \circ D_g \circ D_f$ and $D_g \circ D_f \circ D_f$ are not
    pseudo-random permutations \cite{Ohnishi88,ZhengMI90o}.

\item $D_g \circ D_g \circ D_f \circ D_f$ is not a super pseudo-random
    permutation \cite{SadeghiyanP92}.

\item For all $i, j, k \geq 1$, $D_{f^k} \circ D_{f^j} \circ D_{f^i}$ is
    not a pseudo-random permutation \cite{ZhengMI90i}.

\item For all $i, j, k, \ell \geq 1$, $D_{f^\ell} \circ D_{f^k} \circ
    D_{f^j} \circ D_{f^i}$ is not a super pseudo-random permutation
    \cite{SadeghiyanP92}.

\item $D_{f^2} \circ D_{f} \circ D_{f} \circ D_{f}$ is a pseudo-random
    permutation \cite{Pieprzyk91}.

\item If $I$ is an identity function, $D_{f^2} \circ D_{I} \circ D_{f}
    \circ D_{f^2} \circ D_{I} \circ D_{f}$ is a super pseudo-random
    permutation \cite{SadeghiyanP92}.

\item If $\zeta$ is a simple function like a shift operation, $D_{f \circ
    \zeta \circ f} \circ D_{f} \circ D_{f} \circ D_{f}$ is a pseudo-random
    permutation and $D_{f \circ \zeta \circ f} \circ D_{f} \circ D_{f}
    \circ D_{f} \circ D_{f}$ is a super pseudo-random permutation
    \cite{Patarin92h}.
\end{itemize}

Although many studies are concerned with the use of a small number of
pseudo-random functions, reducing the size of the secret key using fewer
pseudo-random functions is not a huge benefit. This is because it is possible
to increase a small key to a large key using a pseudo random bit generator.

\begin{figure}[t]
\centering
\includegraphics[scale=0.75]{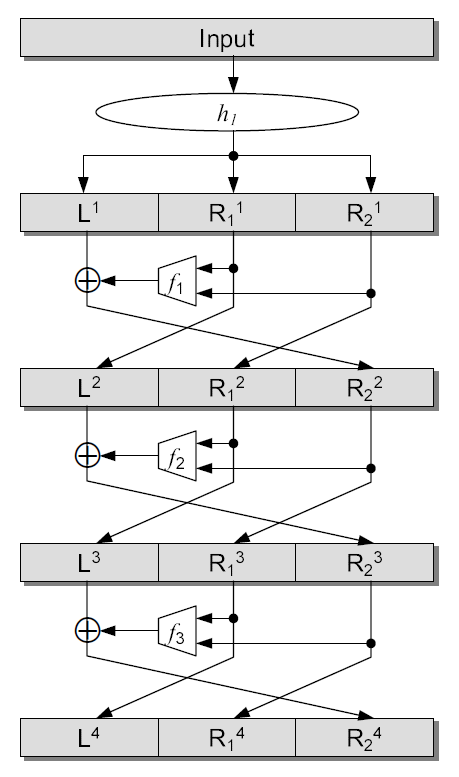}
\caption{Naor and Reingold's PRP}
\label{fig:nr_prp}
\end{figure}

\begin{figure}[t]
\centering
\includegraphics[scale=0.75]{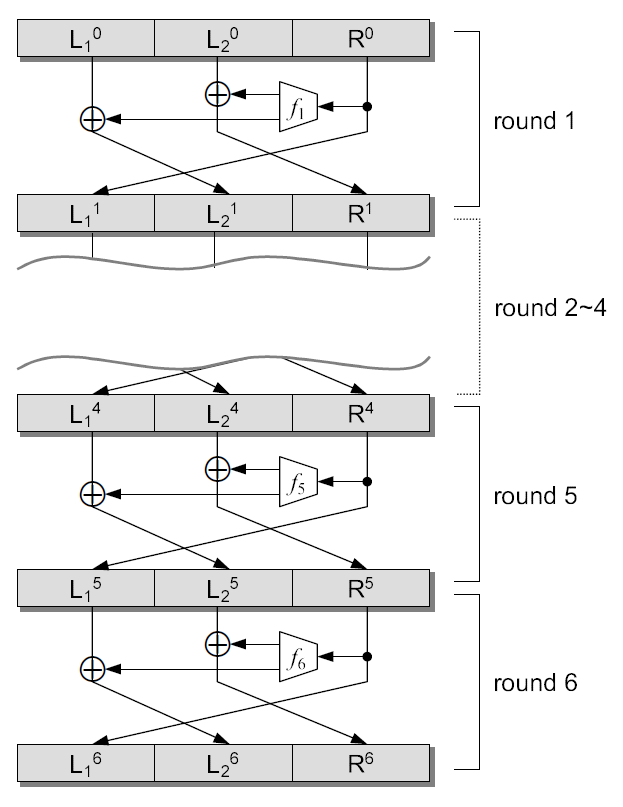}
\caption{Jutla's PRP}
\label{fig:jutla_prp}
\end{figure}

Research to build pseudo-random permutation generators from unbalanced
Feistel network structures has only recently begun \cite{NaorR99,Jutla98}.
Naor and Reingold showed that an unbalanced Feistel network with total $k+2$
number of rounds is a pseudo-random permutation generator if it uses a
pairwise independent permutation in the first round, a source-heavy
unbalanced Feistel network where the size of a source block is $k$ times
larger than a target-block in the remaining rounds, and the $F$ function of
the Feistel network is a pseudo-random function \cite{NaorR99}. The pairwise
independent permutation is defined as a permutation in which the distribution
of the function output values ​​is uniform even if any two input values ​​are
selected. The pseudo-random permutation generator of Naor and Reingold is
given in Figure \ref{fig:nr_prp} where $k$ is two, $h_1$ is a pairwise
independent of permutation, and $f_1, f_2$, and $f_3$ are pseudo-random
functions.

Jutla showed that an unbalanced Feistel network with total $2k+2$ number of
rounds is a pseudo-random permutation generator if it uses a pseudo-random
function for the $F$-function of the Feistel network, and a target-heavy
unbalanced Feistel network where a target-block size is larger than a source
block by $k$ times \cite{Jutla98}. At this time, the probability that an
oracle machine that distinguishes between an ideal random permutation
generator and a $2k+2$ rounds target-heavy unbalanced Feistel network is less
than $(m^k / 2^{kn})$. The pseudo-random permutation generator of Jutla is
given in Figure \ref{fig:jutla_prp} where $k$ is two and total rounds is six.

\chapter{Analysis of Unbalanced Feistel Networks} \label{chap:analysis-ufn}

In this chapter, we analyze the conditions for permutation generators based
on Feistel networks to be pseudo-random. This chapter is summarized as
follows. An unbalanced Feistel network is a Feistel network with different
sizes of source and target blocks. The unbalanced Feistel network is largely
divided into a source-heavy unbalanced Feistel network and a target-heavy
unbalanced Feistel network. For a source-heavy unbalanced Feistel network
($kn$:$n$-UFN) where a source block is $k$ times larger than a target block,
a $k+2$ rounds $kn$:$n$-UFN using pseudo-random functions is a pseudo-random
permutation generator. For a target-heavy unbalanced Feistel network
($n$:$kn$-UFN) where a target-block is $k$ times larger than a source block,
a $k+2$ rounds $n$:$kn$-UFN using pseudo-random functions is a pseudo-random
permutation generator. Therefore, the minimum number of rounds for a
unbalanced Feistel network using pseudo-random functions to be pseudo-random
is $k+2$ rounds.

The structure of this chapter is as follows. In Section 3.1, we divide
unbalanced Feistel networks into two categories. In Section 3.2, we overview
the proof method to prove the pseudorandomness of an unbalanced Feistel
network. In Section 3.3, we analyze the conditions for a source-heavy
unbalanced Feistel network to be pseudo-random. In Section 3.4, we analyze
the conditions for a target-heavy unbalanced Feistel network to be
pseudo-random.

\section{Definition and Category}

In a Feistel network, a block that is the input of a round function is called
a source block, and a block that is combined with the output of a round
function is called a target block. An unbalanced Feistel network is a Feistel
network with different source and target block sizes. An unbalanced Feistel
network with a source-block size of $s$ bits and a target block size of $t$
bits is denoted as $s$:$t$-UFN.

\begin{figure}
\centering
\includegraphics[scale=0.75]{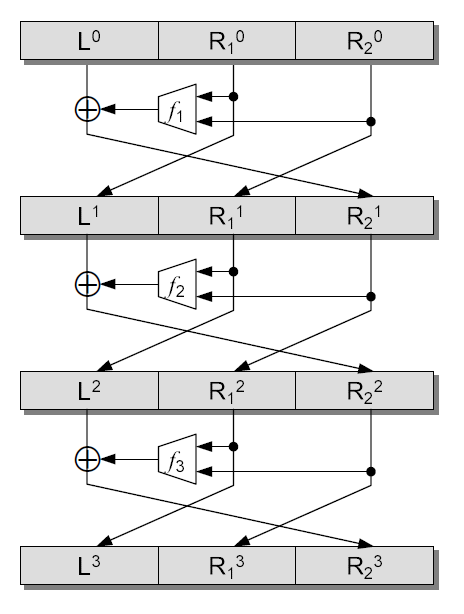}
\caption{The structure of a 3 rounds $2n$:$n$-UFN}
\label{fig:3r_2nn_ufn}
\end{figure}

An unbalanced Feistel network is largely classified as a source-heavy
unbalanced Feistel network or a target-heavy unbalanced Feistel network.
A source-heavy unbalanced Feistel network is an unbalanced Feistel network
where the size of a source block is larger than that of a target block. The
source-heavy unbalanced Feistel network is denoted by $kn$:$n$-UFN and it is
defined as follows. For instance, a 3 rounds $2n$:$n$-UFN structure is
described in Figure \ref{fig:3r_2nn_ufn}.

\begin{definition}[$kn$:$n$-UFN] \label{def:knn-ufn}
For any function $f$ belonging to the set of functions $F:I_{kn} \rightarrow
I_n$, one round $kn$:$n$-UFN is defined by the following permutation
    $$D_f (L \| R_1 \| \cdots \| R_k)
    \stackrel{def}{=} (R_1 \| \cdots \| R_k \| L \oplus f(R_1 \| \cdots R_k)).$$
Similarly, for any functions $f_1, f_2, \ldots, f_r$ belonging to the set of
functions $F:I_{kn} \rightarrow I_n$, an $r$ rounds $kn$:$n$-UFN is defined
by the following permutation
    $$D_{f_r} \circ \cdots \circ D_{f_2} \circ D_{f_1}
        (L^0 \| R_1^0 \| \cdots \| R_k^0)
    \stackrel{def}{=} D_{f_r} \circ \cdots \circ D_{f_2} ( D_{f_1}
        (L^0 \| R_1^0 \| \cdots \| R_k^0) ).$$
In this case, we have $|L| = |R_i| = n$.
\end{definition}

\begin{figure}
\centering
\includegraphics[scale=0.75]{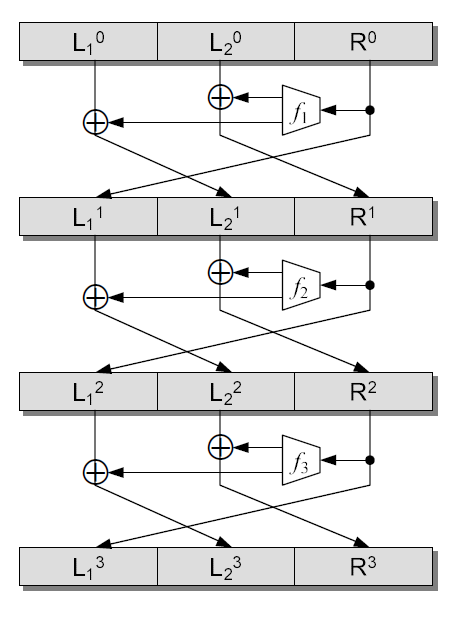}
\caption{The structure of a 3 rounds $n$:$2n$-UFN}
\label{fig:3r_n2n_ufn}
\end{figure}

A target-heavy unbalanced Feistel network is an unbalanced Feistel network
where the size of a target block is larger than that of a source block. The
target-heavy unbalanced Feistel network is denoted by $n$:$kn$-UFN and it is
defined as follows. For instance, a 3 rounds $n$:$2n$-UFN structure is
described in Figure \ref{fig:3r_n2n_ufn}.

\begin{definition}[$n$:$kn$-UFN]
For any function $f$ belonging to the set of functions $F:I_{n} \rightarrow
I_{kn}$, one round $n$:$kn$-UFN is defined by the following permutation
    $$D_f (L_1 \| \cdots \| L_k \| R)
    \stackrel{def}{=} (R \| ( L_1 \| \cdots \| L_k ) \oplus f(R))
    = (R \| L_1 \oplus C_1(f(R)) \| \cdots \| C_k(f(R))).$$
In this case, the function $C(\cdot)$ satisfies $C_1(f(R)) \| \cdots \|
C_k(f(R)) = f(R)$. Similarly, for any functions $f_1, f_2, \ldots, f_r$
belonging to the set of functions $F:I_{n} \rightarrow I_{kn}$, an $r$ rounds
$n$:$kn$-UFN is defined by the following permutation
    $$D_{f_r} \circ \cdots \circ D_{f_2} \circ D_{f_1}
        (L_1^0 \| \cdots \| L_k^0 \| R^0)
    \stackrel{def}{=}
    D_{f_r} \circ \cdots \circ D_{f_2} ( D_{f_1}
        (L_1^0 \| \cdots \| L_k^0 \| R^0) ).$$
In this case, we have $|L_i| = |R| = |C_i(\cdot)| = n$.
\end{definition}

\section{Overview of the Pseudo-Random Proof}

We first overview the way to prove that an $r$ rounds unbalanced Feistel
network using a pseudo-random function generator is a pseudo-random
permutation generator. The overall method is similar to the method used by
Luby and Rackoff \cite{LubyR88}.

First, we show that an $r-1$ rounds unbalanced Feistel network is not a
pseudo-random permutation generator. For this, we show that there exists a
linear relationship between the input and output values ​​of the $r-1$ rounds
unbalanced Feistel network. Then we use this linear relationship to build an
oracle machine that distinguishes between an ideal random permutation
generator and the $r-1$ rounds unbalanced Feistel network permutation
generator.

Next, we show that if an $r$ rounds unbalanced Feistel network that uses
ideal random functions is pseudo-random, then an $r$ rounds unbalanced
Feistel network that uses pseudo-random functions is also pseudo-random. This
is because if the $r$ rounds unbalanced Feistel network using ideal random
functions is pseudo-random but the $r$ rounds unbalanced Feistel network
using pseudo-random functions is not pseudo-random, then it is possible to
show an contradiction that the pseudo-random function is pseudo-random.

The following is a proof strategy to showing that an $r$ rounds unbalanced
Feistel network using ideal random functions is pseudo-random. We first
define the case where the $r$ rounds unbalanced Feistel network is not a
pseudo-random permutation generator as a BAD event. For the BAD event, we
prove the following two things.
\begin{enumerate}
\item If the BAD event does not occur, the output of the $r$ rounds
    unbalanced Feistel network that uses ideal random functions is uniform.

\item The probability of the BAD event is very low.
\end{enumerate}
By using these two things, we can prove that the $r$ rounds Feistel network
using ideal random functions is a pseudo-random permutation generator.

\section{The Pseudo-Random Proof of $kn$:$n$-UFN} \label{sec:analysis-knn-ufn}

The following theorem show that a $kn$:$n$-UFN is not pseudo-random if the
number of rounds is less than or equal to $k+1$.

\begin{theorem} \label{thm:k+1_knn_ufn_not_prp}
A $k+1$ rounds $kn$:$n$-UFN is not pseudo-random.
\end{theorem}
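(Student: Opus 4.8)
The plan is to exhibit a deterministic linear (XOR) relationship that every $k+1$ rounds $kn$:$n$-UFN necessarily satisfies between its input and its output, regardless of the choice of round functions $f_1, \ldots, f_{k+1}$, and then to convert this relationship into a probabilistic polynomial-time oracle machine $M$ that distinguishes the UFN permutation generator from an ideal random permutation generator $H$ on $(k+1)n$ bits, thereby violating Definition \ref{def:prp}.

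First I would track how a single $n$-bit block propagates through the rounds. Writing the state as $k+1$ blocks of $n$ bits each, one application of $D_{f_i}$ shifts every block one position to the left and appends the freshly computed block $L \oplus f_i(R_1 \| \cdots \| R_k)$ in the rightmost position. The key observation is that the block $L^0 \oplus f_1(R_1^0 \| \cdots \| R_k^0)$ created in round $1$ occupies the rightmost of the $k+1$ positions, and since each later round shifts it one place to the left, it reaches the leftmost position after exactly $k$ further rounds, i.e. at the end of round $k+1$. Hence the leading $n$-bit block of the output of a $k+1$ rounds $kn$:$n$-UFN equals
$$ L^0 \oplus f_1(R_1^0 \| \cdots \| R_k^0). $$
Crucially, $f_1$ is evaluated only on the source blocks $R_1^0 \| \cdots \| R_k^0$ and is independent of the target block $L^0$.

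The distinguisher then follows. I would have $M$ make two oracle queries $x_1 = (L_1 \| R_1 \| \cdots \| R_k)$ and $x_2 = (L_2 \| R_1 \| \cdots \| R_k)$ that share identical source blocks but satisfy $L_1 \neq L_2$, read the leading $n$-bit blocks $P_1, P_2$ of the two replies, and output $1$ exactly when $P_1 \oplus P_2 = L_1 \oplus L_2$. For any $k+1$ rounds $kn$:$n$-UFN the displayed relation cancels the common term $f_1(R_1 \| \cdots \| R_k)$ and forces $P_1 \oplus P_2 = L_1 \oplus L_2$ with probability $1$, so $M$ outputs $1$ always. Against an ideal random permutation $H$, the two replies are distinct uniformly random strings, and a short counting argument shows $\Pr[P_1 \oplus P_2 = L_1 \oplus L_2] = 2^{kn}/(2^{(k+1)n}-1) \le 2^{-n+1}$. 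The resulting advantage is at least $1 - 2^{-n+1}$, which exceeds $1/p(n)$ for every polynomial $p$ and all sufficiently large $n$, so the $k+1$ rounds $kn$:$n$-UFN is not pseudo-random.

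The main obstacle is the index bookkeeping in the first step: I must verify for general $k$ that the block manufactured in round $1$ migrates to the leading position after \emph{precisely} $k+1$ rounds, since an off-by-one error would either pick up an extra $f$-contribution or track a block whose value still depends on $L^0$ in an uncontrolled way, breaking the invariant. A secondary care point is the ideal-permutation bound, where I must use that $L_1 \neq L_2$ makes $L_1 \oplus L_2$ nonzero, which in turn forces the two leading output blocks to differ and yields the clean $\le 2^{-n+1}$ estimate.
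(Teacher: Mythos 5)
Your proposal is correct and follows essentially the same route as the paper: both derive the invariant $L^{k+1} = L^0 \oplus f_1(R_1^0 \| \cdots \| R_k^0)$ by tracking the block created in round $1$ as it shifts left, and both build the same two-query distinguisher checking $L_p^{k+1} \oplus L_q^{k+1} = L_p^0 \oplus L_q^0$. Your only addition is a slightly more careful bound ($2^{kn}/(2^{(k+1)n}-1) \le 2^{-n+1}$ rather than the paper's $1/2^n$) for the ideal-permutation case, which does not change the argument.
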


\begin{proof}
For the proof, we first show that there is a linear relationship between the
input and output values ​​of the $kn$:$n$-UFN, and that this linear
relationship can be used to create an oracle machine that can distinguish
between an ideal random permutation generator and the $kn$:$n$-UFN. From the
definition of a $kn$:$n$-UFN, we first obtains the following equation
    $$L^{k+1} = R_1^k = R_2^{k-1} = \cdots = R_k^1
    = L^0 \oplus f(R_1 \| \cdots \| R_k).$$

We select two oracle queries $x_p = (L_p^0 \| R_{p,1}^0 \| \cdots \|
R_{p,k}^0 )$ and $x_q = ( L_q^0 \| R_{q,1}^0 \| \cdots \| R_{q,k}^) )$ where
$p$ and $q$ are indexes of two oracle queries with $1 \leq p < q \leq m$.
Then, we have $L_p^{k+1} = L_p^0 \oplus f_1(R_{p,1}^0 \| \cdots \| R_{p,k}^0
)$ and $L_q^{k+1} = L_q^0 \oplus f_1(R_{q,1}^0 \| \cdots \| R_{q,k}^0 )$.
Therefore, if we choose two oracle queries $x_p$ and $x_q$ in which $L_p^0$
and $L_q^0$ are only different, then we can derive the following relation
    $$L_p^{k+1} \oplus L_q^{k+1} = L_p^0 \oplus L_q^0$$
since $f_1(R_{p,1}^0 \| \cdots \| R_{p,k}^0 ) = f_1(R_{q,1}^0 \| \cdots \|
R_{q,k}^0 )$.

This linear relation can be used to build an oracle machine $M$ that
distinguishes between the $kn$:$n$-UFN and an ideal random permutation
generator. First, the oracle machine creates two oracle queries $x_p$ and
$x_q$ that differ only in $L^0$ values and receives the responses $<x_p,
y_p>, <x_q, y_q>$. If the equation $L_p^{k+1} \oplus L_q^{k+1} = L_p^0 \oplus
L_q^0$ is satisfied from the responses of the queries, then the oracle
machine outputs $1$. Otherwise, the oracle machine outputs $0$. Thus, if the
$x_p$ and $x_q$ values ​​are generated by the $kn$:$n$-UFN, then the output of
the oracle machine is always $1$. However, if the $x_p$ and $x_q$ values are
generated by the ideal random permutation generator, then the probability
that the equation is satisfied is $1 / 2^n$. Therefore we have the following
equation
    $$\big| \Pr (M^P(1^{(k+1)n}) = 1) - \Pr (M^K(1^{(k+1)n} = 1) \big|
    = 1 - 1/2^n > 1/p(n)$$
where $P$ is the $kn$:$n$-UFN and $K$ is the ideal random permutation
generator. From this equation, the $k+1$ rounds $kn$:$n$-UFN is not
pseudo-random.
\end{proof}

We now prove that a $k+2$ rounds $kn$:$n$-UFN permutation generator using
pseudo-random functions is pseudo-random. First, we define an event that can
be used to distinguish a $k+2$ rounds $kn$:$n$-UFN using ideal random
functions from an ideal random permutation generator as the following BAD
event.

\begin{definition}(The BAD event $\xi$ of a $k+2$ rounds $kn$:$n$-UFN)
\label{def:bad-knn-ufn} %
A random variable $\xi^i$ is defined as an event in which $R_{p,1}^i \|
\cdots \| R_{p,k}^i$ and $R_{q,1}^i \| \cdots \| R_{q,k}^i$ of two oracle
queries with indexes $p$ and $q$ are equal where $1 \leq p < q \leq m$. The
BAD event $\xi$ is a random variable defined as $\vee_{i=1}^{k+1} \xi^i$.
\end{definition}

If an oracle machine is able to distinguish between a $k+2$ rounds
$kn$:$n$-UFN and an ideal random permutation generator, then the BAD event
must occur. This means that if the BAD event does not occur, then the $k+2$
rounds $kn$:$n$-UFN is equal to the ideal random permutation generator. In
the following lemma, we prove it.

\begin{lemma} \label{lem:knn_ufn_not_bad}
A $k+2$ rounds $kn$:$n$-UFN permutation generator using an ideal random
function generator is equal to an ideal random permutation generator if the
BAD event does not occur. That is, for all possible $\sigma_1, \sigma_2,
\ldots, \sigma_m \in \bits^{(k+1)n}$, we have
    $$\Pr (\wedge_{i=1}^m (y_i = \sigma_i) | \neg\xi) = 1/2^{(k+1)nm}$$
where $y_i$ is the output of the $k+2$ rounds $kn$:$n$-UFN permutation
generator.
\end{lemma}

\begin{proof}
From the definition of a $k+2$ rounds $kn$:$n$-UFN using ideal random
functions, the reply $y_p$ of the $p$th oracle machine query $x_p$ is
described as
    \begin{align*}
    y_p
    &= (L_p^{k+2} \| R_{p,1}^{k+2} \| \cdots \| R_{p,k}^{k+2}) \\
    &= (L_p^1 \oplus h_2(\cdot) \| L_p^2 \oplus h_3(\cdot) \| \cdots \|
        L_p^{k+1} \oplus h_{k+2}(\cdot))
    \end{align*}
where $h_1, h_2, \ldots, h_{k+2}$ are functions generated by an ideal random
function generator whose inputs are $kn$ bits and whose outputs are $n$ bits,
and the input values ​​of $h_i(\cdot)$ are $(R_{p,1}^{i-1} \| \cdots \|
R_{p,k}^{i-1})$. By the definition of the BAD event $\xi$, we obtain the
following equation
    $$\neg \xi = \wedge_{i=1}^{k+1} \neg \xi^i
    = \wedge_{i=1}^{k+1} (R_{p,1}^{i} \| \cdots \| R_{p,k}^{i} \neq R_{q,1}^{i}
      \| \cdots \| R_{q,k}^{i})$$
where $p$ and $q$ are the oracle indexes with $1 \leq p < q \leq m$.

Thus the output value of $h$ becomes a value with uniform distribution since
the input values ​​of $h_2, \ldots, h_{k+2}$ are different for all oracle
queries and $h$ is a function generated by the ideal random function
generator. Therefore, the value of $L^{i-1} \oplus h_i(\cdot)$ becomes a
value with uniform distribution.
\end{proof}

\begin{lemma} \label{lem:knn_ufn_bad}
The probability of the BAD event in a $k+2$ rounds $kn$:$n$-UFN permutation
generator is bounded by
    $$\Pr(\xi) \leq (k+1) m^2 / 2^{n+1}.$$
\end{lemma}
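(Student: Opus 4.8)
The plan is to bound the probability of the BAD event $\xi = \vee_{i=1}^{k+1} \xi^i$ by a union bound, first over the round index $i$ and then over pairs of distinct oracle queries. Recall that $\xi^i$ is the event that for some pair of queries $p < q$, the source blocks agree: $R_{p,1}^i \| \cdots \| R_{p,k}^i = R_{q,1}^i \| \cdots \| R_{q,k}^i$. Since these are $kn$-bit strings, a naive collision probability would be $1/2^{kn}$, which is far smaller than what the lemma claims. The looser bound $1/2^n$ per collision suggests the intended argument tracks only a \emph{single} $n$-bit block coincidence at each round, so the first thing I would clarify is exactly which $n$-bit coordinate drives the collision at round $i$.

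The key structural observation, available from Definition~\ref{def:knn-ufn}, is that the source block at round $i$ is a shift of the previous source block: passing through one $kn$:$n$-UFN round, the blocks $R_1 \| \cdots \| R_k$ move left and the newly computed $n$-bit value $L^{i-1} \oplus h_i(\cdot)$ enters as $R_k^i$. Thus the full source block $R_{p,1}^i \| \cdots \| R_{p,k}^i$ agrees between queries $p$ and $q$ only if, among other things, the freshly generated block $R_{p,k}^i = L_p^{i-1} \oplus h_i(R_{p,1}^{i-1} \| \cdots \| R_{p,k}^{i-1})$ equals $R_{q,k}^i$. I would condition on $\neg\xi^1 \wedge \cdots \wedge \neg\xi^{i-1}$, so that at round $i$ the inputs to $h_i$ differ across the two queries and $h_i$ (an ideal random function) outputs an $n$-bit value that is uniform and independent; the probability that the two fresh blocks coincide is then exactly $1/2^n$. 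Summing over the $\binom{m}{2}$ query pairs gives $\Pr(\xi^i \mid \neg\xi^1 \wedge \cdots \wedge \neg\xi^{i-1}) \le \binom{m}{2}/2^n \le m^2/2^{n+1}$.

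Finally I would assemble the pieces: writing $\Pr(\xi) \le \sum_{i=1}^{k+1} \Pr(\xi^i \mid \neg\xi^1 \wedge \cdots \wedge \neg\xi^{i-1})$ via the standard chain-rule union bound over the $k+1$ rounds, each summand is at most $m^2/2^{n+1}$, yielding
\[
\Pr(\xi) \le (k+1)\,\frac{m^2}{2^{n+1}},
\]
which is exactly the claimed bound.

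The step I expect to be the main obstacle is the careful conditioning in the second paragraph: I must argue that, given the source blocks differ at every earlier round, the input to $h_i$ genuinely differs between the two queries so that independence of the fresh outputs is legitimate, and I must confirm that a collision of the full $kn$-bit source block at round $i$ reduces to (or is dominated by) the single fresh-block collision event of probability $1/2^n$ rather than demanding agreement in all $k$ blocks simultaneously. Getting the bookkeeping right so that the per-round, per-pair contribution is genuinely $1/2^n$—neither artificially smaller (which would over-strengthen the claim) nor uncontrolled—is where the real work lies; the union bounds themselves are routine.
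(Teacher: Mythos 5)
Your proposal is correct and follows essentially the same route as the paper: decompose $\xi=\vee_{i=1}^{k+1}\xi^i$, union-bound over the $\binom{m}{2}$ query pairs, and use the fact that the newly created $n$-bit block $L^{i-1}\oplus h_i(\cdot)$ is a fresh uniform value when the inputs to the ideal random function $h_i$ differ, giving $m^2/2^{n+1}$ per round and $(k+1)m^2/2^{n+1}$ overall. The only (cosmetic) difference is that the paper bounds each $\Pr(\xi^i)$ unconditionally as ${}_mC_2/2^{in}$ by requiring all $i$ freshly generated blocks to collide simultaneously, whereas you condition on no earlier collision and track only the single newest block -- which is arguably cleaner, though for $i=1$ (where there is nothing to condition on) you still need the small extra remark that if two distinct queries share the same $kn$-bit source block they must differ in $L^0$, so the round-1 fresh blocks then differ deterministically and the $1/2^n$ bound survives.
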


\begin{proof}
By the definition of the BAD event, we have $\xi = \vee_{i=1}^{k+1} \xi^i$.
We first calculate the probability of each event $\xi^i$. A random variable
$\xi^i$ represents an event that $R_{p,1}^i \| \cdots \| R_{p,k}^i$ and
$R_{q,1}^i \| \cdots \| R_{q,k}^i$ are equal for the indexes $p$ and $q$ of
the oracle queries with $1 \leq p < q \leq m$. By the definition of a
$kn$:$n$-UFN structure, we obtain the following equation
    \begin{align*}
    & R_1^i \| \cdots \| R_{k-i}^i \| R_{k-i+1}^i \| \cdots \| R_k^i \\
    &= R_{i+1}^0 \| \cdots \| R_{k}^0 \| L^0 \oplus h_1(\cdot) \| \cdots
      \| L^{i-1} \oplus h_i(\cdot)
    \end{align*}
where each $h_i(\cdot)$ is a function generated by an ideal random function
generator.

Thus, the best choice for the event $\xi^i$ to occur is to select an oracle
query with $R_{p,j}^0 = R_{q,j}^0$ for $i+1 < j < k$. In this case, the
probability of the event $\xi^i$ is $\Pr (\xi^i) = {}_mC_2 \cdot 1 / 2^{in}$.
Therefore, we have $\Pr (\xi) < (k +1) m^2 / 2^{n+1}$.
\end{proof}

In Theorem \ref{thm:k+2_knn_ufn_prp}, we prove that a $k+2$ rounds
$kn$:$n$-UFN that uses ideal random functions is pseudo-random from the above
two lemmas, and we also prove that a $k+2$ rounds $kn$:$n$-UFN that uses
pseudo-random functions is also pseudo-random.

\begin{theorem} \label{thm:k+2_knn_ufn_prp}
A $k+2$ rounds $kn$:$n$-UFN permutation generator using pseudo-random
functions is a pseudo-random permutation generator.
\end{theorem}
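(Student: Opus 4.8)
The plan is to establish the theorem in two stages, mirroring the Luby--Rackoff strategy already outlined in Section 3.2. The first stage shows that the $k+2$ rounds $kn$:$n$-UFN using \emph{ideal random functions} is itself a pseudo-random permutation generator; the second stage transfers this conclusion to the construction using genuine pseudo-random functions. For the first stage I would combine Lemma \ref{lem:knn_ufn_not_bad} and Lemma \ref{lem:knn_ufn_bad} directly. Given any oracle machine $M$ making at most $m = m(n)$ queries, I would split the output probability according to whether the BAD event $\xi$ occurs. Writing $P$ for the $k+2$ rounds $kn$:$n$-UFN with ideal random functions and $K$ for the ideal random permutation generator, the statement to prove is that $|\Pr(M^P(1^{(k+1)n}) = 1) - \Pr(M^K(1^{(k+1)n}) = 1)|$ is negligible.

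The key computation in the first stage is a conditioning argument. By Lemma \ref{lem:knn_ufn_not_bad}, conditioned on $\neg\xi$ the responses of $P$ are distributed exactly uniformly over $\bits^{(k+1)n}$ for each of the $m$ distinct queries, which is precisely the distribution produced by $K$ (up to the standard and negligible correction for the permutation-versus-function distinction on distinct inputs). Hence the two oracles are statistically identical unless $\xi$ occurs, so the distinguishing advantage is bounded by $\Pr(\xi)$. By Lemma \ref{lem:knn_ufn_bad}, $\Pr(\xi) \leq (k+1)m^2/2^{n+1}$, which is negligible in $n$ for any polynomial query bound $m = m(n)$. This yields the conclusion for ideal random functions.

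The second stage is the hybrid/reduction argument, and it is the step I expect to be the main obstacle to write cleanly. Here I would argue by contradiction: suppose the $k+2$ rounds $kn$:$n$-UFN using pseudo-random functions $f_1, \ldots, f_{k+2}$ is \emph{not} pseudo-random, so some probabilistic polynomial-time oracle machine $M$ distinguishes it from $K$ with non-negligible advantage. I would introduce a sequence of $k+3$ hybrid permutation generators $P_0, P_1, \ldots, P_{k+2}$, where $P_j$ uses ideal random functions in the first $j$ rounds and pseudo-random functions in the remaining rounds, so that $P_0$ is the fully pseudo-random construction and $P_{k+2}$ is the fully ideal one shown pseudo-random in stage one. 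Since $M$'s advantage over the $k+2$ intermediate gaps sums to something non-negligible, some adjacent pair $(P_{j-1}, P_j)$ must be distinguished with non-negligible advantage. The heart of the reduction is to convert such a distinguisher into a distinguisher against a single pseudo-random function $f_j$: I would build an oracle machine $M'$ that, given oracle access to a function $g$ (either pseudo-random or ideal), simulates the whole $kn$:$n$-UFN by sampling the ideal functions for rounds $1,\ldots,j-1$ itself, querying $g$ for round $j$, and evaluating the pseudo-random functions for rounds $j+1,\ldots,k+2$ directly, then runs $M$ on the simulated permutation.

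The delicate points in this reduction are that the simulation must be polynomial-time (each of $M$'s at most $m$ oracle queries triggers only a constant number of evaluations of $g$ and of the sampled functions, so this holds), and that $M'$ faithfully reproduces the distribution of either $P_{j-1}$ or $P_j$ depending on whether $g$ is ideal or pseudo-random. If $M'$ distinguishes $g$ from a random function with non-negligible advantage, this contradicts Definition \ref{def:prf}, the pseudo-randomness of $f_j$. Therefore no adjacent hybrids are distinguishable, the total advantage against $P_0$ is negligible, and the $k+2$ rounds $kn$:$n$-UFN using pseudo-random functions is pseudo-random, completing the proof.
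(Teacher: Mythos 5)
Your proposal is correct and follows essentially the same two-stage argument as the paper: bounding the advantage against the ideal-random-function version by $\Pr(\xi)$ via Lemma \ref{lem:knn_ufn_not_bad} and Lemma \ref{lem:knn_ufn_bad}, then a hybrid argument replacing one round's function at a time to reduce to the pseudo-randomness of the round functions. The only (welcome) refinement is your explicit acknowledgment of the permutation-versus-function correction, which the paper glosses over.
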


\begin{proof}
From Lemma \ref{lem:knn_ufn_not_bad} and Lemma \ref{lem:knn_ufn_bad}, we can
show that a $kn$:$n$-UFN permutation generator $P$ using an ideal random
function generator is a pseudo-random permutation generator. First, we have
$| \Pr (M^P(1^{(k+1)n}) = 1 | \neg \xi) - \Pr (M^K (1^{(k+1)n}) = 1) | = 0$
since $P$ is the same as an ideal random permutation generator $K$ when the
BAD event $\xi$ does not occur by Lemma \ref{lem:knn_ufn_not_bad}. We also
have $| \Pr (M^P(1^{(k+1)n}) = 1 | \xi) - \Pr (M^K (1^{(k+1)n}) = 1) | \leq
1$ since the absolute value of the probability difference is less than 1.
Therefore, we have the following equation
    \begin{align*}
    &   \big| \Pr (M^P(1^{(k+1)n}) = 1) - \Pr (M^K(1^{(k+1)n}) = 1) \big| \\
    &=  \big| \Pr (M^P(1^{(k+1)n}) = 1 | \xi) - \Pr (M^K(1^{(k+1)n}) = 1) \big|
        \cdot \Pr(\xi) + \\
    &\quad~
        \big| \Pr (M^P(1^{(k+1)n}) = 1 | \neg \xi) - \Pr (M^K(1^{(k+1)n}) = 1) \big|
        \cdot \Pr(\neg \xi) \\
    &\leq \Pr(\xi)
     \leq (k+1)m^2 / 2^{n+1}.
    \end{align*}

We now show that a $k+2$ rounds $kn$:$n$-UFN permutation generator $P$ using
a pseudo-random function generator is a pseudo-random permutation generator.
For this, we use the proof by contradiction. That is, if a $k+2$ rounds
$kn$:$n$-UFN using an ideal random function generator is pseudo-random but a
$k+2$ rounds $kn$:$n$-UFN using a pseudo-random function generator is not
pseudo-random, then we can derive a contradiction to the pseudo-randomness of
the pseudo-random function generator.

Suppose that a $kn$:$n$-UFN permutation generator $P$ using a pseudo-random
function generator is not pseudo-random. Then there exists an oracle machine
$M$ which distinguishes an ideal random permutation generator $K$ and $P$
with a probability greater than $1/n^c$ for a constant $c$.

First, we let $D_{f_{k+2}} \circ \cdots \circ D_{f_{i+1}} \circ D_{h_i} \circ
\cdots \circ D_{h_1}(\cdot)$ be a permutation generator in which an ideal
random function generator is used from the first round to the $i$th round and
a pseudo-random function generator is used from the $i+1$th round to the
$k+2$th round in a $kn$:$n$-UFN permutation generator for $i$ with $0 \leq i
\leq k + 2$. Let $p_i^D$ be the probability that an oracle machine that has
access to this permutation generator will output $1$. That is,
    \begin{align*}
    p_i^D
    =   \Pr (M^{D_{f_{k+2}} \circ \cdots \circ D_{f_{i+1}} \circ D_{h_i}
        \circ \cdots \circ D_{h_1}} (1^{(k+1)n}) = 1)
    \end{align*}
where $f_{i+1}, \ldots, f_{k+2}$ are functions generated by a pseudo-random
function generator and $h_1, \ldots, h_i$ are functions generated by an ideal
random function generator. Let $p^K$ be the probability that an oracle
machine that has access to the ideal random permutation generator will output
$1$. Since we supposed that the $k+2$ rounds $kn$:$n$-UFN using a
pseudo-random function generator is not pseudo-random, we get the following
equation
    \begin{align*}
    1/n^c
    &\leq | p^K - p_0^D | \\
    &\leq | p^K - p_{k+2}^D | + | p_{k+2}^D - p_{k+1}^D | + \cdots +
          | p_{i+1}^D - p_{i}^D | + \cdots + | p_1^D - p_0^D |.
    \end{align*}

However, since the $k+2$ rounds $kn$:$n$-UFN using an ideal random function
generator has been shown to be a pseudo-random permutation generator, we have
$| P^K - P_{k+2}^D | \leq (k+1) m^2 / 2^{n+1}$. Therefore, we have $|
p_{i+1}^D - p_{i}^D | \geq 1 / (k+3)n^c$ for some $i$. By using this, an
oracle machine $C$ that distinguishes an ideal random function generator and
a pseudo-random function generator with a probability higher than $1 /
(k+3)n^c$ can be constructed as follows.

In the oracle machine $C$, we first change the part of $C$ that calculates
the oracle query reply to $D_{f_{k+2}} \circ \cdots \circ D_{f_{i+1}} \circ
D_X \circ D_{h_{i-1}} \circ \cdots \circ D_{h_1} (\cdot)$. In this case, $X$
is a function whose input is $kn$ bits and whose output is $n$ bits. If $X$
in $C$ is a function generated by a pseudo-random function generator $F$,
then we have $p_{i}^D = \Pr (C^F (1^{kn}) = 1)$. On the other hand, if $X$ in
$C$ is a function generated by an ideal random function generator $H$, then
we have $p_{i+1}^D = \Pr (C^H (1^{kn}) = 1)$. However, it is possible to
distinguish the ideal random function generator and the pseudo-random
function generator with a probability greater than $1/(k+3)n^c$ since $|
p_{i+1}^D - p_{i}^D | \geq 1/(k+3)n^c$. But this contradicts that the
pseudo-random function generator is pseudo-random. Therefore, the $k+2$
rounds $kn$:$n$-UFN using a pseudo-random function generator is a
pseudo-random permutation generator.
\end{proof}

From Theorem \ref{thm:k+1_knn_ufn_not_prp} and Theorem
\ref{thm:k+2_knn_ufn_prp}, the minimum number of rounds of the $kn$:$n$-UFN
permutation generator using a pseudo-random function generator to be
pseudo-random is $k+2$.

\section{The Pseudo-Random Proof of $n$:$kn$-UFN} \label{sec:analysis-nkn-ufn}

The following theorem show that a $n$:$kn$-UFN permutation generator is not
pseudo-random if the number of rounds is less than or equal to $k+1$.

\begin{theorem} \label{thm:k+1_nkn_ufn_not_prp}
A $k+1$ rounds $n$:$kn$-UFN permutation generator is not pseudo-random.
\end{theorem}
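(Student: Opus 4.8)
The plan is to mirror the argument for Theorem \ref{thm:k+1_knn_ufn_not_prp}: exhibit a linear relation between input and output bits that holds with probability $1$ for the $k+1$ rounds $n$:$kn$-UFN but only with probability $1/2^n$ for an ideal random permutation, and then turn this relation into a distinguishing oracle machine.

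First I would unwind the round recurrence. Re-parsing the one-round output $(R \| (L_1\|\cdots\|L_k)\oplus f(R))$ as a fresh state $(L_1'\|\cdots\|L_k'\|R')$ gives the update rules $L_1^i = R^{i-1}$, $L_j^i = L_{j-1}^{i-1}\oplus C_{j-1}(f_i(R^{i-1}))$ for $2\le j\le k$, and $R^i = L_k^{i-1}\oplus C_k(f_i(R^{i-1}))$. Tracking the initial block $L_1^0$ as it shifts through the $L$-positions and into the source position shows
$$R^k = L_1^0 \oplus \bigoplus_{j=1}^{k} C_j\big(f_j(R^{j-1})\big),$$
and since $L_1^{k+1} = R^k$, the first $n$-bit block of the $(k+1)$-round output satisfies $L_1^{k+1} = L_1^0 \oplus \bigoplus_{j=1}^k C_j(f_j(R^{j-1}))$.

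The key step is to choose the two oracle queries so that every function argument $R^0,\ldots,R^{k-1}$ agrees across the two queries; then all the chunks $C_j(f_j(R^{j-1}))$ cancel in the XOR and we obtain $L_{p,1}^{k+1}\oplus L_{q,1}^{k+1} = L_{p,1}^0\oplus L_{q,1}^0$. Inspecting the recurrence, each $R^i$ depends on the inputs $R^0$ and $L_{k-i+1}^0$, the remaining dependence being through the already-matched earlier $R$-values; so matching $R^0$ forces $R^1$ to match iff $L_k^0$ matches, which forces $R^2$ to match iff $L_{k-1}^0$ matches, and so on down to $L_2^0$. Hence I would pick two queries $x_p, x_q$ that agree on $R^0, L_2^0, L_3^0, \ldots, L_k^0$ and differ only in $L_1^0$; this keeps all of $R^0,\ldots,R^{k-1}$ equal while making the relation nontrivial. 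Note the pleasant symmetry with the source-heavy case: here $kn$ input bits are frozen and $n$ bits are varied.

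The distinguisher $M$ then issues these two queries, reads the first $n$ bits $L_{p,1}^{k+1}, L_{q,1}^{k+1}$ of the two replies, and outputs $1$ exactly when $L_{p,1}^{k+1}\oplus L_{q,1}^{k+1} = L_{p,1}^0\oplus L_{q,1}^0$. For the $n$:$kn$-UFN this always holds, whereas for the ideal random permutation generator $K$ the left side is uniform on $I_n$, so the relation holds with probability $1/2^n$; this yields an advantage $1 - 1/2^n > 1/p(n)$, establishing that the $k+1$ rounds $n$:$kn$-UFN is not pseudo-random. I expect the main obstacle to be the bookkeeping in the third paragraph: correctly identifying, from the shift-and-XOR recurrence, precisely which input blocks ($R^0$ together with $L_2^0,\ldots,L_k^0$) must be frozen so that the entire sequence of function inputs is replicated across the two queries while $L_1^0$ stays free. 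Once that indexing is pinned down, the cancellation and the advantage computation are routine.
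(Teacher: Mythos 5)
Your proposal is correct and follows essentially the same route as the paper's proof: unwind the recurrence to get $L_1^{k+1} = L_1^0 \oplus \bigoplus_{j=1}^{k} C_j(f_j(R^{j-1}))$, fix two queries differing only in $L_1^0$ so that all the function arguments $R^0,\ldots,R^{k-1}$ coincide and the XOR terms cancel, and convert the resulting invariant $L_{p,1}^{k+1}\oplus L_{q,1}^{k+1}=L_{p,1}^0\oplus L_{q,1}^0$ into a distinguisher with advantage $1-1/2^n$. Your third paragraph actually spells out the inductive bookkeeping (which input blocks must be frozen to keep each $R^i$ matched) more explicitly than the paper does.
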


\begin{proof}
From the definition of a $n$:$kn$-UFN, we obtain the following equation
    \begin{align*}
    L_1^{k+1}
    &=  R^k
     =  L_k^{k-1} \oplus C_k(f_k(R^{k-1})) \\
    &=  L_{k-1}^{k-2} \oplus C_{k-1}(f_{k-1}(R^{k-2})) \oplus
        C_{k}(f_{k}(R^{k-1})) \\
    &\quad \vdots \\
    &=  L_1^0 \bigoplus_{i=1}^{k} C_i(f_i(R^{i-1})).
    \end{align*}

If we choose two oracle queries $x_p$ and $x_q$ in which $L_p^0$ and $L_q^0$
are only different, then we have $R_p^i = R_q^i$ for all $i$ with $1 \leq i
\leq k-1$. Therefore, from the oracle responses $<x_p, y_p>, <x_q, y_q>$, we
can derive the following relation
    \begin{align*}
    L_{p,1}^{k+1} \oplus L_{q,1}^{k+1} = L_{p,1}^0 \oplus L_{q,1}^0.
    \end{align*}

By using this linear relation, we can build an oracle machine $M$ that
distinguishes between an $n$:$kn$-UFN permutation generator and an ideal
random permutation generator. First, the oracle machine creates two oracle
queries $x_p$ and $x_q$ that differ only in $L_1^0$ values and receives
oracle responses $<x_p, y_p>, <x_q, y_q>$. If the equation $L_{p,1}^{k+1}
\oplus L_{q,1}^{k+1} = L_{p,1}^0 \oplus L_{q,1}^0$ is satisfied from the
responses of the queries, then the oracle machine outputs $1$. Otherwise, the
oracle machine outputs $0$. Thus, if the $x_p$ and $x_q$ values ​​are generated
by the $n$:$kn$-UFN, then the output of the oracle machine is always $1$.
However, if the $x_p$ and $x_q$ values are generated by the ideal random
permutation generator, then the probability that the relation is satisfied is
$1/2^n$. Therefore we have the following equation
    \begin{align*}
    \big| \Pr (M^P(1^{(k+1)n}) = 1) - \Pr (M^K(1^{(k+1)n} = 1) \big|
    = 1 - 1/2^n > 1/p(n)
    \end{align*}
where $P$ is the $n$:$kn$-UFN and $K$ is the ideal random permutation
generator. Thus the $k+1$ rounds $n$:$kn$-UFN permutation generator is not
pseudo-random.
\end{proof}

We now prove that a $k+2$ rounds $n$:$kn$-UFN permutation generator using
pseudo-random functions is pseudo-random. First, we define an event that can
be used to distinguish a $k+2$ rounds $n$:$kn$-UFN using ideal random
functions from an ideal random permutation generator as the following BAD
event.

\begin{definition}(The BAD event $\xi$ of a $k+2$ rounds $n$:$kn$-UFN)
\label{def:bad-k+2-nkn-ufn} %
A random variable $\xi^i$ is defined as an event in which $R_{p}^i$ and
$R_{q}^i$ of two oracle queries with indexes $p$ and $q$ are equal where $1
\leq p < q \leq m$. The BAD event $\xi$ is a random variable defined as
$\vee_{i=k}^{k+1} \xi^i$.
\end{definition}

\begin{lemma} \label{lem:nkn_ufn_not_bad}
A $k+2$ rounds $n$:$kn$-UFN permutation generator using an ideal random
function generator is equal to an ideal random permutation generator if the
BAD event does not occur. That is, for all possible $\sigma_1, \sigma_2,
\ldots, \sigma_m \in \bits^{(k+1)n}$, we have
    $$\Pr (\wedge_{i=1}^m (y_i = \sigma_i) | \neg \xi) = 1/2^{(k+1)nm}$$
where $y_i$ is the output of the $k+2$ rounds $n$:$kn$-UFN permutation
generator.
\end{lemma}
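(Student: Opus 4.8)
The plan is to mirror the argument of Lemma~\ref{lem:knn_ufn_not_bad} for the source-heavy case: unroll the $k+2$ rounds to write each reply $y_p$ as an explicit function of the query input and the ideal random round functions, and then argue that, conditioned on $\neg\xi$, the $(k+1)n$ output bits are uniform and jointly independent across the $m$ queries. First I would record the one-round recursion that follows directly from the definition of the $n$:$kn$-UFN,
\begin{align*}
L_1^i &= R^{i-1}, \\
L_{j+1}^i &= L_j^{i-1} \oplus C_j(f_i(R^{i-1})) \quad (1 \le j \le k-1), \\
R^i &= L_k^{i-1} \oplus C_k(f_i(R^{i-1})),
\end{align*}
where $f_i$ is the ideal random function used in round $i$. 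Iterating this down to round $k+2$ exhibits exactly where each output coordinate gets its randomness.

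The key observation is that, unlike the source-heavy case where all $k+1$ output blocks are fed by distinct round functions, here only the last two rounds matter. From the recursion the reply splits as $L_{p,1}^{k+2} = R_p^{k+1}$ together with
\begin{align*}
(L_2^{k+2} \| \cdots \| L_k^{k+2} \| R^{k+2}) = (L_1^{k+1} \| \cdots \| L_k^{k+1}) \oplus f_{k+2}(R^{k+1}).
\end{align*}
Thus the single function $f_{k+2}$, evaluated at the source $R^{k+1}$, fills $k$ of the $k+1$ output blocks, while the remaining block equals $R^{k+1}$, which in turn is randomized one round earlier through $R^{k+1} = L_k^k \oplus C_k(f_{k+1}(R^k))$. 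This is precisely why the BAD event of Definition~\ref{def:bad-k+2-nkn-ufn} only needs to guard rounds $k$ and $k+1$.

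Next I would exploit the two clauses of $\neg\xi$ separately. Conditioning on the random functions of rounds $1,\dots,k$, the event $\neg\xi^k$ guarantees that the inputs $R_p^k$ to $f_{k+1}$ are pairwise distinct, so the values $f_{k+1}(R_p^k)$ are fresh and independent uniform, which makes $R_p^{k+1} = L_{p,1}^{k+2}$ uniform and independent across queries. Similarly, $\neg\xi^{k+1}$ guarantees that the inputs $R_p^{k+1}$ to $f_{k+2}$ are pairwise distinct, so the values $f_{k+2}(R_p^{k+1})$ are fresh and independent uniform over $\bits^{kn}$, which randomizes the last $kn$ bits of every reply. Because $f_{k+1}$ and $f_{k+2}$ are independent ideal random functions, the $n$-bit contribution and the $kn$-bit contribution combine to a uniform, query-wise independent $(k+1)n$-bit output, yielding the claimed $1/2^{(k+1)nm}$.

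The hard part will be the joint-independence bookkeeping, and in particular the interaction between the conditioning and the coordinate $L_{p,1}^{k+2} = R_p^{k+1}$, which is both an output block and the argument of $f_{k+2}$. One must justify that the block produced by $f_{k+1}$ and the blocks produced by $f_{k+2}$ are independent even though they share the value $R^{k+1}$; the clean way is to condition on the entire state after round $k+1$ and use that an ideal random function's value at a fixed point is uniform and independent of that point. The genuine subtlety is that $\neg\xi^{k+1}$ forces the blocks $R_p^{k+1}$ to be pairwise distinct, so strict uniformity over all of $\bits^{(k+1)nm}$ holds only up to the usual random-function versus random-permutation approximation; reconciling this with the exact value $1/2^{(k+1)nm}$ is the step that requires the most care, exactly as in the Luby--Rackoff analysis where this gap is absorbed into the collision bound of the companion lemma.
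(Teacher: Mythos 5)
Your proposal follows essentially the same route as the paper's proof: unroll the last two rounds to write the reply as $\bigl(L_{p,k}^{k}\oplus C_k(h_{k+1}(R_p^k))\,\|\,(L_{p,1}^{k+1}\|\cdots\|L_{p,k}^{k+1})\oplus h_{k+2}(R_p^{k+1})\bigr)$, then use $\neg\xi^k$ and $\neg\xi^{k+1}$ to guarantee that the inputs to $h_{k+1}$ and $h_{k+2}$ are fresh, so their outputs are uniform and independent across queries. You are in fact somewhat more careful than the paper about the independence bookkeeping and about the fact that conditioning on no collisions technically perturbs exact uniformity, a subtlety the paper's proof passes over silently.
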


\begin{proof}
From the definition of a $k+2$ rounds $n$:$kn$-UFN, the reply $y_p$ of the
$p$th oracle machine query $x_p$ is described as
    \begin{align*}
    y_p
    &=  (L_{p,1}^{k+2} \| L_{p,2}^{k+2} \| \cdots \| L_{p,k}^{k+2} \| R_p^{k+2}) \\
    &=  (R_p^{k+1} \| L_{p,2}^{k+2} \| \cdots \| L_{p,k}^{k+2} \| R_p^{k+2}) \\
    &=  (R_p^{k+1} \| (L_{p,1}^{k+1} \| \cdots \| L_{p,k-1}^{k+1} \| L_{p,k}^{k+1})
        \oplus h_{k+2}(R_p^{k+1})) \\
    &=  (L_{p,k}^k \oplus C_k(h_{k+1}(R_p^k)) \|
        (L_{p,1}^{k+1} \| \cdots \| L_{p,k-1}^{k+1} \| L_{p,k}^{k+1})
        \oplus h_{k+2}(R_p^{k+1}))
    \end{align*}
where $h_{k+1}, h_{k+2}$ are functions generated by an ideal random function
generator whose inputs are $n$ bits and whose outputs are $kn$ bits.

By the definition of the BAD event $\xi$, we obtain the following equation
    \begin{align*}
    \neg \xi
    =   \wedge_{i=1}^{k+1} \neg \xi^i
    =   (R_p^k \neq R_q^k) \wedge (R_p^{k+1} \neq R_q^{k+1})
    \end{align*}
where $q$ and $p$ are the oracle indexes with $1 \leq p < q \leq m$. Thus the
output values of $h_{k+1}, h_{k+2}$ become values with uniform distribution
since the input values ​​of $h_{k+1}, h_{k+2}$ are different for all oracle
queries and $h_{k+1}, h_{k+2}$ are functions generated by the ideal random
function generator.
\end{proof}

\begin{lemma} \label{lem:nkn_ufn_bad}
The probability of the BAD event in a $k+2$ rounds $n$:$kn$-UFN permutation
generator is bounded by
    $$\Pr(\xi) \leq m^2 / 2^n.$$
\end{lemma}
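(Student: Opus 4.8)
The plan is to mirror the structure of Lemma~\ref{lem:knn_ufn_bad}: reduce the BAD event to a per-pair source-block collision and then apply a union bound. Since the BAD event of the $n$:$kn$-UFN is $\xi = \xi^k \vee \xi^{k+1}$, I would first write
$$\Pr(\xi) \leq \Pr(\xi^k) + \Pr(\xi^{k+1}),$$
and then bound each term by summing over the ${}_mC_2$ pairs of distinct oracle queries,
$$\Pr(\xi^i) \leq \sum_{1 \leq p < q \leq m} \Pr(R_p^i = R_q^i).$$
The whole lemma then reduces to the claim that, for a fixed pair $x_p \neq x_q$ and for $i \in \{k, k+1\}$, the source blocks collide with probability at most $1/2^n$.

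The main tool is the one-round recursion for the source block, $R^i = L_k^{i-1} \oplus C_k(f_i(R^{i-1}))$, already used in the proof of Theorem~\ref{thm:k+1_nkn_ufn_not_prp}. The key observation is that $C_k(f_i(\cdot))$ is merely one $n$-bit chunk of the output of the ideal random function $f_i$; hence whenever two queries feed \emph{distinct} inputs to some round function, that function's output is fresh and uniform, and it randomizes the corresponding source-block difference.

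For the per-pair bound I would expand $R_p^i \oplus R_q^i$ and peel off round functions starting from round $i$ and moving backward. If $R_p^{i-1} \neq R_q^{i-1}$, then $C_k(f_i(R_p^{i-1}))$ and $C_k(f_i(R_q^{i-1}))$ are independent uniform $n$-bit strings (evaluations of the ideal random $f_i$ at two distinct points), so, conditioning on the earlier round functions, the difference $R_p^i \oplus R_q^i$ is uniform and the collision probability is exactly $1/2^n$. If instead $R_p^{i-1} = R_q^{i-1}$, the $f_i$ terms cancel and the collision reduces to $L_{p,k}^{i-1} = L_{q,k}^{i-1}$; substituting $L_k^{i-1} = L_{k-1}^{i-2} \oplus C_{k-1}(f_{i-1}(R^{i-2}))$ moves the analysis one round back, to a fresh evaluation of $f_{i-1}$ on inputs $R^{i-2}$, and so on. I would continue peeling until reaching the last round at which the two query paths feed distinct source inputs; the freshness there yields probability $1/2^n$. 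The recursion cannot run out before such a round is found: by the invertibility of the network the sequence $R^0, \ldots, R^k$ together with $f_1, \ldots, f_k$ determines the whole input, so two distinct queries that agreed on every earlier source block would be forced (via the telescoping identity $R^k = L_1^0 \oplus \bigoplus_{j=1}^k C_j(f_j(R^{j-1}))$) to satisfy $R_p^k \oplus R_q^k = L_{p,1}^0 \oplus L_{q,1}^0 \neq 0$, i.e. no collision occurs and the probability is $0$. The same peeling argument, applied with inputs $R^0, \ldots, R^k$, handles round $k+1$; there the divergence forced by invertibility lies among the inputs $R^0, \ldots, R^k$ themselves, so a fresh round is always reached.

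Putting the pieces together gives $\Pr(\xi^k), \Pr(\xi^{k+1}) \leq {}_mC_2 / 2^n$, and therefore
$$\Pr(\xi) \leq 2 \cdot {}_mC_2 \cdot \frac{1}{2^n} = \frac{m(m-1)}{2^n} \leq \frac{m^2}{2^n},$$
as required. I expect the per-pair step to be the main obstacle: when the two query paths track each other for several rounds before diverging, one must isolate a single fresh random-function evaluation, and the careful bookkeeping is identifying the last round with distinct source inputs (together with justifying its existence via invertibility) so that the conditioning is clean and the collision probability stays at $1/2^n$ rather than inflating.
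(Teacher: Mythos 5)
Your proposal is correct and follows essentially the same route as the paper's proof: decompose $\xi$ as $\xi^k \vee \xi^{k+1}$, union-bound over the ${}_mC_2$ pairs, and bound each per-pair collision by peeling the recursion $R^i = L_k^{i-1} \oplus C_k(f_i(R^{i-1}))$ backward to the last round with distinct source inputs (whose existence is guaranteed because two distinct queries cannot agree on all of $R^0, L_k^0, \ldots, L_1^0$), at which point a fresh chunk of an ideal random function's output forces the collision probability down to $1/2^n$. The only cosmetic difference is that the paper sums the geometric series $\frac{1}{2^n} + \cdots + \frac{1}{2^{kn}}$ over the possible divergence depths, whereas you condition on the last divergence round and read off $1/2^n$ directly; both yield the stated bound $m^2/2^n$.
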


\begin{proof}
For the proof, we should show that $\Pr(\xi^k) \leq m^2 / 2^{n+1}$ and
$\Pr(\xi^{k+1}) \leq m^2 / 2^{n+1}$ since $Pr(\xi) = \Pr(\xi^k) +
\Pr(\xi^{k+1})$ by the definition of the BAD event $\xi$.

First, we show that $\Pr(\xi^k) \leq m^2 / 2^{n+1}$. By the definition of
$\xi^k$, we have $\Pr(\xi^k) = \Pr(R_p^k = R_q^k)$ for the oracle query
indexes $p$ and $q$ with $1 \leq p < q \leq m$. Thus we obtain the following
equation
    \begin{align*}
    \Pr(R_p^k = R_q^k)
    &=
    \left\{ \begin{array}{ll}
        \Pr(L_{p,k}^{k-1} \| R_p^{k-1} = L_{q,k}^{k-1} \| R_q^{k-1}) &
        \text{ if } L_{p,k}^{k-1} \| R_p^{k-1} = L_{q,k}^{k-1} \| R_q^{k-1} \\
        1/2^n & \text{ otherwise. }
    \end{array} \right.
    \end{align*}
Similarly, we can obtain the following equation
    \begin{align*}
    &\Pr(L_{p,i+1}^{i} \| \cdots \| R_p^{i} = L_{q,i+1}^{i} \| \cdots \| R_q^{i}) \\
    &=
    \left\{ \begin{array}{ll}
        \Pr(L_{p,i}^{i-1} \| \cdots \| R_p^{i-1} = L_{q,i}^{i-1} \| \cdots \|
        R_q^{i-1}) & \text{ if }
            L_{p,i}^{i-1} \| \cdots \| R_p^{i-1} = L_{q,i}^{i-1} \|
            \cdots \| R_q^{i-1} \\
        1/2^{(k+1-i)n} & \text{ otherwise }
    \end{array} \right.
    \end{align*}
Since each oracle query should be different, we have $\Pr(L_{p,1}^0 \| \cdots
\| L_{p,k}^0 \| R_p^0 = L_{q,1}^0 \| \cdots \| L_{q,k}^0 \| R_q^0) = 0$ for
oracle query indexes $p$ and $q$ with $1 \leq p < q \leq m$. Thus, we have
    \begin{align*}
    \Pr(\xi^k)
    =   {}_mC_2 \cdot \Pr(R_p^k = R_q^k)
    =   {}_mC_2 \left( \frac{1}{2^n} + \cdots + \frac{1}{2^{kn}} \right)
    \leq \frac{m^2}{2^{n+1}}.
    \end{align*}
By using similar approach, we have $\Pr(\xi^{k+1}) \leq m^2 / 2^{n+1}$.
Therefore, we obtain $\Pr(\xi) = \Pr(\xi^k) + \Pr(\xi^{k+1}) \leq m^2 /
2^{n}$.
\end{proof}

In Theorem \ref{thm:k+2_nkn_ufn_prp}, we prove that a $k+2$ rounds
$n$:$kn$-UFN that uses ideal random functions is pseudo-random from the above
two lemmas, and we also prove that a $k+2$ rounds $n$:$kn$-UFN that uses
pseudo-random functions is also pseudo-random.

\begin{theorem} \label{thm:k+2_nkn_ufn_prp}
A $k+2$ rounds $n$:$kn$-UFN permutation generator using a pseudo-random
function generator is a pseudo-random permutation generator.
\end{theorem}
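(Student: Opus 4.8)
The plan is to follow verbatim the two-stage strategy established for the source-heavy case in Theorem~\ref{thm:k+2_knn_ufn_prp}, since only the internal structure of the round functions (now mapping $I_n$ to $I_{kn}$) changes. First I would prove that the $k+2$ rounds $n$:$kn$-UFN \emph{using an ideal random function generator} is itself a pseudo-random permutation generator, and then I would transfer this to the pseudo-random-function setting by a standard hybrid argument.

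For the first stage, let $P$ denote the $k+2$ rounds $n$:$kn$-UFN with ideal random functions, let $K$ denote the ideal random permutation generator, and let $M$ be an arbitrary probabilistic polynomial-time oracle machine making $m$ queries. Conditioning on the BAD event $\xi$ and applying the law of total probability, I would bound
\begin{align*}
& \big| \Pr(M^P = 1) - \Pr(M^K = 1) \big| \\
&\leq \big| \Pr(M^P = 1 \mid \neg\xi) - \Pr(M^K = 1) \big| \cdot \Pr(\neg\xi) + \Pr(\xi).
\end{align*}
By Lemma~\ref{lem:nkn_ufn_not_bad} the first term is zero, because conditioned on $\neg\xi$ the replies of $P$ are distributed exactly as those of $K$; by Lemma~\ref{lem:nkn_ufn_bad} the remaining term is at most $m^2/2^n$, which is negligible in $n$ for any polynomially bounded $m$. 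Hence $P$ is pseudo-random.

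For the second stage I would argue by contradiction through a hybrid over the $k+2$ rounds. Define the hybrid generator whose first $i$ rounds use ideal random functions and whose last $k+2-i$ rounds use pseudo-random functions, and let $p_i^D$ be the probability that $M$ outputs $1$ against it; thus $p_0^D$ is the all-pseudo-random generator and $p_{k+2}^D = \Pr(M^P = 1)$. Suppose the pseudo-random version were distinguishable from $K$ with advantage at least $1/n^c$. Writing $p^K = \Pr(M^K = 1)$ and applying the triangle inequality,
\begin{align*}
1/n^c \leq |p^K - p_0^D| \leq |p^K - p_{k+2}^D| + \sum_{i=0}^{k+1} |p_{i+1}^D - p_i^D|,
\end{align*}
and since the first stage gives $|p^K - p_{k+2}^D| \leq m^2/2^n$, some adjacent pair must satisfy $|p_{i+1}^D - p_i^D| \geq 1/(k+3)n^c$. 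I would then construct an oracle machine $C$ which, given an oracle $X : I_n \to I_{kn}$, answers $M$'s queries using the generator that places $D_X$ in the critical round and the appropriate ideal and pseudo-random functions in the other rounds; when $X$ is pseudo-random $C$ reproduces one of the two neighbouring hybrids, and when $X$ is ideal it reproduces the other. Thus $C$ distinguishes a pseudo-random function generator from an ideal one with advantage $\geq 1/(k+3)n^c$, contradicting Definition~\ref{def:prf}.

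The main obstacle is not conceptual but lies in the bookkeeping of the hybrid reduction: I must check that each intermediate hybrid is a well-defined permutation given that every round function now expands an $n$-bit source block into a $kn$-bit value XORed into the target blocks, and that $C$ faithfully simulates the two adjacent hybrids with exactly the claimed dependence on its oracle $X$. Once the negligible first-stage bound $m^2/2^n$ supplied by Lemma~\ref{lem:nkn_ufn_not_bad} and Lemma~\ref{lem:nkn_ufn_bad} is in hand, the contradiction argument transfers unchanged from the source-heavy proof and completes the theorem.
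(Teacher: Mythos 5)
Your proposal follows essentially the same two-stage argument as the paper's own proof: the identical conditioning on the BAD event $\xi$ combined with Lemma~\ref{lem:nkn_ufn_not_bad} and Lemma~\ref{lem:nkn_ufn_bad} to obtain the $m^2/2^n$ bound for the ideal-random-function case, and the identical hybrid/contradiction reduction with the $1/(k+3)n^c$ gap to transfer the result to pseudo-random functions. The approach and the bounds match the paper's proof, so the proposal is correct.
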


\begin{proof}
From Lemma \ref{lem:nkn_ufn_not_bad} and Lemma \ref{lem:nkn_ufn_bad}, we can
show that an $n$:$kn$-UFN permutation generator $P$ that uses an ideal random
function generator is a pseudo-random permutation generator. First, we have
$| \Pr (M^P(1^{(k+1)n}) = 1 | \neg \xi) - \Pr (M^K (1^{(k+1)n}) = 1) | = 0$
since $P$ is the same as an ideal random permutation generator $K$ when the
BAD event $\xi$ does not occur by Lemma \ref{lem:nkn_ufn_not_bad}. We also
have $| \Pr (M^P(1^{(k+1)n}) = 1 | \xi) - \Pr (M^K (1^{(k+1)n}) = 1) | \leq
1$ since the absolute value of the probability difference is less than 1.
Therefore, we have the following equation
    \begin{align*}
    &   \big| \Pr (M^P(1^{(k+1)n}) = 1) - \Pr (M^K(1^{(k+1)n}) = 1) \big| \\
    &=  \big| \Pr (M^P(1^{(k+1)n}) = 1 | \xi) - \Pr (M^K(1^{(k+1)n}) = 1) \big|
        \cdot \Pr(\xi) + \\
    &\quad~
        \big| \Pr (M^P(1^{(k+1)n}) = 1 | \neg \xi) - \Pr (M^K(1^{(k+1)n}) = 1) \big|
        \cdot \Pr(\neg \xi) \\
    &\leq \Pr(\xi)
     \leq m^2 / 2^{n}.
    \end{align*}

We now show that a $k+2$ rounds $n$:$kn$-UFN permutation generator $P$ using
a pseudo-random function generator is a pseudo-random permutation generator.
For this, we use the proof by contradiction. That is, if a $k+2$ rounds
$n$:$kn$-UFN using an ideal random function generator is pseudo-random but a
$k+2$ rounds $n$:$kn$-UFN using a pseudo-random function generator is not
pseudo-random, then we can derive a contradiction to the pseudo-randomness of
the pseudo-random function generator.

Suppose that an $n$:$kn$-UFN permutation generator $P$ using a pseudo-random
function generator is not pseudo-random. Then there exists an oracle machine
$M$ which distinguishes an ideal random permutation generator $K$ and $P$
with a probability greater than $1/n^c$ for a constant $c$.

First, we let $D_{f_{k+2}} \circ \cdots \circ D_{f_{i+1}} \circ D_{h_i} \circ
\cdots \circ D_{h_1}(\cdot)$ be a permutation generator in which an ideal
random function generator is used from the first round to the $i$th round and
a pseudo-random function generator is used from the $i+1$th round to the
$k+2$th round in the $n$:$kn$-UFN permutation generator for $i$ with $0 \leq
i \leq k+2$. Let $p_i^D$ be the probability that an oracle machine that has
access to this permutation generator will output $1$. That is,
    \begin{align*}
    p_i^D
    =   \Pr (M^{D_{f_{k+2}} \circ \cdots \circ D_{f_{i+1}} \circ D_{h_i}
        \circ \cdots \circ D_{h_1}} (1^{(k+1)n}) = 1)
    \end{align*}
where $f_{i+1}, \ldots, f_{k+2}$ are functions generated by a pseudo-random
function generator and $h_1, \ldots, h_i$ are functions generated by an ideal
random function generator. Let $p^K$ be the probability that an oracle
machine that has access to the ideal random permutation generator will output
$1$. Since we supposed that the $k+2$ rounds $n$:$kn$-UFN using a
pseudo-random function generator is not pseudo-random, we get the following
equation
    \begin{align*}
    1/n^c
    &\leq | p^K - p_0^D | \\
    &\leq | p^K - p_{k+2}^D | + | p_{k+2}^D - p_{k+1}^D | + \cdots +
          | p_{i+1}^D - p_{i}^D | + \cdots + | p_1^D - p_0^D |.
    \end{align*}

However, since the $k+2$ rounds $n$:$kn$-UFN using an ideal random function
generator has been shown to be a pseudo-random permutation generator, we have
$| P^K - P_{k+2}^D | \leq m^2 / 2^{n}$. Therefore, we have $| p_{i+1}^D -
p_{i}^D | \geq 1 / (k+3)n^c$ for some $i$. By using this, an oracle machine
$C$ that distinguishes an ideal random function generator and a pseudo-random
function generator with a probability higher than $1 / (k+3)n^c$ can be
constructed as follows.

In the oracle machine $C$, we first change the part of $C$ that calculates
the oracle query reply to $D_{f_{k+2}} \circ \cdots \circ D_{f_{i+1}} \circ
D_X \circ D_{h_{i-1}} \circ \cdots \circ D_{h_1} (\cdot)$. In this case, $X$
is a function whose input is $n$ bits and whose output is $kn$ bits. If $X$
in $C$ is a function generated by a pseudo-random function generator $F$,
then we have $p_{i}^D = \Pr (C^F (1^{kn}) = 1)$. On the other hand, if $X$ in
$C$ is a function generated by an ideal random function generator $H$, then
we have $p_{i+1}^D = \Pr (C^H (1^{kn}) = 1)$. However, it is possible to
distinguish the ideal random function generator and the pseudo-random
function generator with a probability greater than $1/(k+3)n^c$ since $|
p_{i+1}^D - p_{i}^D | \geq 1/(k+3)n^c$. But this contradicts that the
pseudo-random function generator is pseudo-random. Therefore, the $k+2$
rounds $n$:$kn$-UFN using the pseudo-random function generator is a
pseudo-random permutation generator.
\end{proof}

From Theorem \ref{thm:k+1_nkn_ufn_not_prp} and Theorem
\ref{thm:k+2_nkn_ufn_prp}, the minimum number of rounds of the $n$:$kn$-UFN
permutation generator using a pseudo-random function generator to be
pseudo-random is $k+2$.

\chapter{Analysis of Modified Unbalanced Feistel Networks}
\label{chap:analysis-ufn2}

In this section, we propose an unbalanced Feistel network structure that can
extend an $n$-bit block cipher to a $(k+1)n$-bit block cipher. Then we
analyze the conditions for this modified Feistel structure to be
pseudo-random. This chapter is summarized as follows.

An $n$:$kn$-UFN2 structure, which can extend an $n$-bit block cipher to a
$(k+1)n$-bit block cipher, is a target-heavy unbalanced Feistel network where
both the input and output of round functions are $n$ bits. The main advantage
of the $n$:$kn$-UFN2 structure is that it allows to build a new block cipher
with large input size by using an existing block cipher with proven security.
In order for an $n$:$kn$-UFN2 permutation generator that uses pseudo-random
functions to be pseudo-random, the round number $k$ must be odd and the total
number of rounds must be at least $2k+1$.

The structure of this chapter is as follows. In Section 4.1, we define an
$n$:$kn$-UFN2 structure and examines the properties of this structure. In
Section 4.2, we analyze the conditions for the $n$:$kn$-UFN2 structure to be
a pseudo-random permutation generator.

\section{Definition and Property} \label{sec:ufn2-def-and-prop}

As the information throughput increases, a block cipher with a large input is
needed to process large amounts of data. In general, however, it is not easy
to build a block cipher with a large input. It is also difficult to ensure
the security of this new block cipher. However, if we can use a block cipher
that is as secure as DES, we can trust the security of the new cipher.
However, the DES cipher does not expand to a block cipher with a larger input
because the input is fixed at 64 bits. Therefore, we propose an $n$:$kn$-UFN2
permutation generator that can extend an $n$-bit block cipher to a
$(k+1)n$-bit block cipher. We also analyze the condition for this structure
to be a pseudo-random permutation generator.

\begin{figure}
\centering
\includegraphics[scale=0.75]{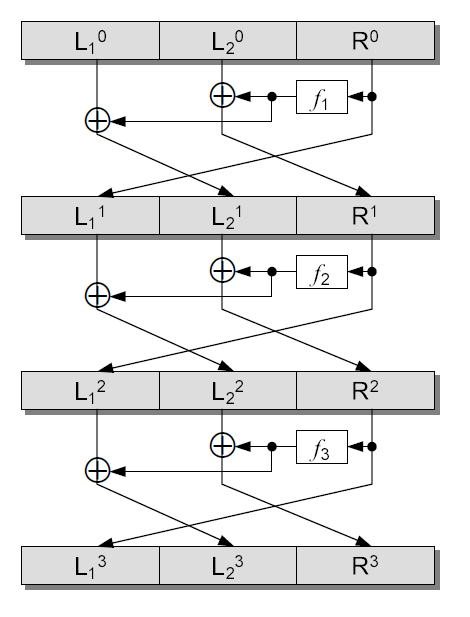}
\caption{The structure of a 3 rounds $n$:$2n$-UFN2}
\label{fig:3r_n2n_ufn2}
\end{figure}

An $n$:$kn$-UFN2 structure is a target-heavy unbalanced Feistel network in
which the input and output of round functions are $n$ bits. It is defined as
follows. For instance, a 3 rounds $n$:$2n$-UFN2 structure is described in
Figure \ref{fig:3r_n2n_ufn2}.

\begin{definition}[$n$:$kn$-UFN2] \label{def:nkn-ufn2}
For any function $f$ belonging to the set of functions $F:I_{n} \rightarrow
I_{n}$, one round of $n$:$kn$-UFN2 is defined by the following permutation
    $$D_f (L_1 \| \cdots \| L_k \| R)
    \stackrel{def}{=} (R \| L_1 \oplus f(R) \| \cdots \| L_k \oplus f(R)).$$
Similarly, for any functions $f_1, f_2, \ldots, f_r$ belonging to the set of
functions $F:I_{n} \rightarrow I_{n}$, an $r$ rounds $n$:$kn$-UFN2 is defined
by the following permutation
    $$D_{f_r} \circ \cdots \circ D_{f_2} \circ D_{f_1}
        (L_1^0 \| \cdots \| L_k^0 \| R^0)
    \stackrel{def}{=}
    D_{f_r} \circ \cdots \circ D_{f_2} ( D_{f_1}
        (L_1^0 \| \cdots \| L_k^0 \| R^0) ).$$
In this case, we have $|L_i| = |R| = n$.
\end{definition}

\section{The Pseudo-Random Proof of $n$:$kn$-UFN2} \label{sec:analysis-nkn-ufn2}

In Theorem \ref{thm:even_nkn_ufn2_not_prp}, we show that an $n$:$kn$-UFN2
permutation generator is not pseudo-random if $k$ is even. In Theorem
\ref{thm:odd_2k_nkn_ufn2_not_prp}, we show that an $n$:$kn$-UFN2 permutation
generator is not pseudo-random if $k$ is odd and the number of rounds is less
than or equal to $2k$.

\begin{theorem} \label{thm:even_nkn_ufn2_not_prp}
If $k$ is even, then an $n$:$kn$-UFN2 permutation generator is not
pseudo-random.
\end{theorem}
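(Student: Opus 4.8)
The plan is to exhibit a linear invariant of the round function that survives composition precisely when $k$ is even, and then to turn it into a distinguisher exactly as in the proof of Theorem \ref{thm:k+1_nkn_ufn_not_prp}.

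First I would examine a single round. Writing the state as $(L_1 \| \cdots \| L_k \| R)$, the round output is $(R \| L_1 \oplus f(R) \| \cdots \| L_k \oplus f(R))$. Computing the XOR of all $k+1$ output blocks gives
$$R \oplus \bigoplus_{i=1}^{k} \big( L_i \oplus f(R) \big) = \Big( R \oplus \bigoplus_{i=1}^{k} L_i \Big) \oplus \big( k \bmod 2 \big) f(R).$$
When $k$ is even the $k$ copies of $f(R)$ cancel, so the quantity $S = R \oplus \bigoplus_{i=1}^{k} L_i$, the XOR of all blocks of the state, is unchanged by one round. Since this holds for an arbitrary round function $f$, it holds for every round of an $n$:$kn$-UFN2, and by composition the XOR of all blocks of the input equals the XOR of all blocks of the output after any number of rounds. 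This is the heart of the argument, and it is where the even-$k$ hypothesis is used: for odd $k$ a single surviving copy of $f(R)$ destroys the invariant.

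Next I would build the oracle machine $M$. On input $1^{(k+1)n}$, $M$ issues a single query $x = (L_1^0 \| \cdots \| L_k^0 \| R^0)$, receives the reply $y = (L_1^r \| \cdots \| L_k^r \| R^r)$, and outputs $1$ exactly when $R^0 \oplus \bigoplus_{i=1}^{k} L_i^0 = R^r \oplus \bigoplus_{i=1}^{k} L_i^r$. By the invariant, if the oracle is an $n$:$kn$-UFN2 permutation generator $P$ then this equation always holds, so $\Pr(M^P(1^{(k+1)n}) = 1) = 1$. If instead the oracle is the ideal random permutation generator $K$, then for a single query the reply $y$ is uniform, so the $n$-bit XOR of its output blocks is uniform and matches the fixed input XOR with probability $1/2^n$.

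Combining these, I would conclude
$$\big| \Pr (M^P(1^{(k+1)n}) = 1) - \Pr (M^K(1^{(k+1)n}) = 1) \big| = 1 - 1/2^n > 1/p(n)$$
for every polynomial $p$ and sufficiently large $n$, so by Definition \ref{def:prp} the $n$:$kn$-UFN2 permutation generator is not pseudo-random whenever $k$ is even. Note this conclusion is independent of the number of rounds, which is the notable feature of the even-$k$ case. The only subtlety to verify carefully is the uniformity of a single reply under $K$; since exactly one query is made, no collision conditioning is needed and the probability computation is immediate.
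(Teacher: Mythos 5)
Your proposal is correct and follows essentially the same route as the paper: the paper also observes that for even $k$ the XOR of all $k+1$ state blocks is preserved by each round (the $k$ copies of $f_r(R^{r-1})$ cancel), and then builds the same single-query oracle machine that checks this invariant and distinguishes with advantage $1 - 1/2^n$. Your added remark that the conclusion is independent of the number of rounds is consistent with the paper's statement and proof.
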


\begin{proof}
If $k$ is even, then we obtain the following equation
    \begin{align*}
    & L_1^{r} \oplus L_2^r \oplus \cdots \oplus L_k^r \oplus R^r \\
    &\stackrel{(a)}{=}
        R^{r-1} \oplus (L_1^{r-1} \oplus f_r(R^{r-1})) \oplus \cdots \oplus
        (L_k^{r-1} \oplus f_r(R^{r-1})) \\
    &\stackrel{(b)}{=}
        L_1^{r-1} \oplus L_2^{r-1} \oplus \cdots L_k^{r-1} \oplus R^{r-1} \db \\
    &\quad \vdots \db \\
    &\stackrel{(c)}{=}
        L_1^0 \oplus L_2^0 \oplus \cdots L_k^0 \oplus R^0.
    \end{align*}
In this equation, $(a)$ is satisfied from the definition of an $n$:$kn$-UFN2,
$(b)$ is satisfied from the property of $\oplus$ and the number of $f$ is
even, and $(c)$ is satisfied from the recursive application of the above
equation.

We can build an oracle machine $M$ that distinguishes between an
$n$:$kn$-UFN2 permutation generator and an ideal random permutation
generator. First, the oracle machine queries $x = (L_1^0 \| \cdots \| L_k^0
\| R^0)$ and receives a response $y = (L_1^r \| \cdots \| L_k^r \| R^r)$. If
the equation $L_1^0 \oplus \cdots L_k^0 \oplus R^0 = L_1^r \oplus \cdots
\oplus L_k^r \oplus R^r$ is satisfied, then the oracle machine outputs $1$.
Otherwise, it outputs $0$. Thus, if the oracle response is generated by the
$n$:$kn$-UFN2, then the output of the oracle machine is always $1$. However,
if the oracle response is generated by the ideal random permutation
generator, then the probability that the equation is satisfied is $1/2^n$.
Therefore the $n$:$kn$-UFN2 permutation generator with even $k$ is not
pseudo-random since the oracle machine $M$ exists.
\end{proof}

\begin{theorem} \label{thm:odd_2k_nkn_ufn2_not_prp}
If $k$ is odd, then a $2k$ rounds $n$:$kn$-UFN2 permutation generator is not
pseudo-random.
\end{theorem}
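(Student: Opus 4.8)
The plan is to build a two-query distinguisher, exactly in the style of Theorem~\ref{thm:k+1_nkn_ufn_not_prp}, from a linear relation that survives $2k$ rounds precisely because $k$ is odd. Writing the state after round $i$ as $(L_1^i \| \cdots \| L_k^i \| R^i)$, the round equations of Definition~\ref{def:nkn-ufn2} read $L_1^{i+1} = R^i$, $L_j^{i+1} = L_{j-1}^i \oplus f_{i+1}(R^i)$ for $2 \le j \le k$, and $R^{i+1} = L_k^i \oplus f_{i+1}(R^i)$. First I would record the closed form $L_j^i = R^{i-j} \oplus \bigoplus_{m=i-j+1}^{i-1} f_{m+1}(R^m)$, obtained by unrolling the shift, where the source values themselves satisfy $R^i = R^{i-k-1} \oplus \bigoplus_{m=i-k}^{i-1} f_{m+1}(R^m)$, with the convention $R^{-j}=L_j^0$ and empty sums for negative indices.

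Next I would compute the XOR of the first $k$ output blocks, $\bigoplus_{j=1}^k L_j^{2k}$, and substitute the source recurrence for every internal value $R^{k-1}, R^k, \ldots, R^{2k-1}$ that occurs. The crucial bookkeeping step is that, because $k$ is odd and the round count is exactly $2k$, every round-function output $f_{m+1}(R^m)$ appears an \emph{even} number of times in this XOR and hence cancels, with the single exception of $f_k(R^{k-1})$. The outcome is a deterministic identity of the form
$$\bigoplus_{j=1}^k L_j^{2k} = \Lambda \oplus f_k(R^{k-1}),$$
where $\Lambda$ is a fixed $\oplus$-combination of the input blocks $L_1^0, \ldots, L_k^0, R^0$ in which $L_1^0$ occurs with coefficient one.

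I would then pick two queries $x_p, x_q$ agreeing in every input block except $L_1^0$. A short induction on the source recurrence shows that $L_1^0$ first influences a source register only at round $k$ (it enters $R^i$ when $i-k-1=-1$), so $R_p^i = R_q^i$ for $0 \le i \le k-1$; in particular $R_p^{k-1}=R_q^{k-1}$, whence $f_k(R_p^{k-1}) = f_k(R_q^{k-1})$. Taking the XOR of the relation for the two queries, the lone surviving function term cancels and $\Lambda$ contributes only through its $L_1^0$-coefficient, yielding
$$\bigoplus_{j=1}^k L_{p,j}^{2k} \oplus \bigoplus_{j=1}^k L_{q,j}^{2k} = L_{p,1}^0 \oplus L_{q,1}^0.$$
The distinguisher $M$ issues $x_p, x_q$ and outputs $1$ iff this identity holds. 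It holds with probability $1$ for the $2k$ rounds $n$:$kn$-UFN2 and with probability $1/2^n$ for the ideal random permutation generator, so $|\Pr(M^P=1)-\Pr(M^K=1)| = 1 - 1/2^n > 1/p(n)$, which is the desired conclusion.

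The main obstacle is the parity count in the second step: one must verify that, after all internal source values are expanded, each $f_{m+1}(R^m)$ really occurs an even number of times in $\bigoplus_{j=1}^k L_j^{2k}$ and that the unique survivor is exactly $f_k(R^{k-1})$. Oddness of $k$ is indispensable here --- the same computation for even $k$ collapses instead to the all-block XOR invariant of Theorem~\ref{thm:even_nkn_ufn2_not_prp} --- and the round count $2k$ (rather than $2k+1$) is exactly what places the survivor's argument $R^{k-1}$ among the first $k-1$ rounds, where the two queries still coincide. A convenient way to organise the count is to view one round as the affine map $s \mapsto \Pi s \oplus f_{i+1}(R^i)\,w$ on $(\bits^n)^{k+1}$, where $\Pi$ is the cyclic shift and $w=(0,1,\ldots,1)$: then the functional $u=(1,\ldots,1,0)$ annihilates $\Pi^s w$ precisely for $s=0,\ldots,k-1$ when $k$ is odd, which turns the cancellation into a one-line linear-algebra check and simultaneously pins down $f_k(R^{k-1})$ as the only term escaping it.
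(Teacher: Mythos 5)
Your proposal is correct and follows essentially the same route as the paper: both derive the invariant $\bigoplus_{j=1}^{k}L_j^{2k} = \Lambda \oplus f_k(R^{k-1})$ (the paper reaches it by combining the all-block XOR telescoping identity with an expansion of $R^{2k}$, you by directly tracking parities in $\bigoplus_j L_j^{2k}$), and both then use two queries differing only in $L_1^0$, for which $R^{k-1}$ --- and hence the lone surviving function term --- coincides, to build the same two-query distinguisher. The only difference is how the cancellation bookkeeping is organized, and your matrix/functional formulation is arguably cleaner than the paper's somewhat garbled definition of $W$.
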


\begin{proof}
To prove that a $2k$ rounds $n$:$kn$-UFN2 is not pseudo-random, we show that
there is a relation between the input and output of oracle queries. Next, we
build an oracle machine that can distinguish an $n$:$kn$-UFN2 permutation
generator and an ideal permutation generator by using this relation.

From the definition of an $n$:$kn$-UFN2, we obtain the following equation
    \begin{align*}
    & L_1^{2k} \oplus \cdots \oplus L_k^{2k} \oplus R^{2k} \\
    &=  (L_1^{2k-1} \oplus \cdots \oplus L_k^{2k-1} \oplus R^{2k-1}) \oplus
        f_{2k}(R^{2k-1})) \\
    &\quad \vdots \\
    &=  (L_1^0 \oplus \cdots L_k^0 \oplus R^0) \oplus
        \bigoplus_{i=1}^{2k} f_i(R^{i-1}).
    \end{align*}
In this case, $R^{2k}$ is represented as
    \begin{align*}
    R^{2k}
    &=  R^{2k-(k+1)} \oplus (f_{k+1}(R^k) \oplus \cdots \oplus f_{2k}(R^{2k-1})) \\
    &\quad \vdots \\
    &=  W \oplus \bigoplus_{i=1}^{2k} f_i(R^{i-1}) \oplus f_k(R^{k-1})
    \end{align*}
where $W$ is defined as
    \begin{align*}
    W =
    \left\{ \begin{array}{ll}
        L_i^0   & \text{ if } 3(r+1) \mod (k+1) = i-1, \\
        R^0     & \text{ if } 3(r+1) \mod (k+1) = k.
    \end{array} \right.
    \end{align*}
Thus, we obtain the following relation between the input and output of an
$n$:$kn$-UFN2 as
    \begin{align*}
    (L_1^{2k} \oplus \cdots \oplus L_k^{2k}) \oplus
        (L_1^0 \oplus \cdots \oplus L_k^0 \oplus R^0) \oplus W
        = f_k(R^{k-1}).
    \end{align*}
If we choose two oracle queries with different $L_{p,1}^0$ and $L_{q,1}^0$
blocks for the indexes $p$ and $q$ with $1 \leq p < q \leq m$, then we obtain
the following equations
    \begin{align*}
    & (L_{p,1}^{2k} \oplus \cdots \oplus L_{p,k}^{2k}) \oplus
        (L_{p,1}^0 \oplus L_2^0 \oplus \cdots \oplus L_k^0 \oplus R^0) \oplus W_p
        = f_k(R^{k-1}), \\
    & (L_{q,1}^{2k} \oplus \cdots \oplus L_{q,k}^{2k}) \oplus
        (L_{q,1}^0 \oplus L_2^0 \oplus \cdots \oplus L_k^0 \oplus R^0) \oplus W_q
        = f_k(R^{k-1}).
    \end{align*}
By using these two equations, we have the following relation for two oracle
queries and their responses as
    \begin{align*}
    (L_{p,1}^{2k} \oplus \cdots \oplus L_{p,k}^{2k}) \oplus
        (L_{q,1}^{2k} \oplus \cdots \oplus L_{q,k}^{2k}) \oplus
        (L_{p,1}^0 \oplus L_{q,1}^0) \oplus
        (W_p \oplus W_q) = 0.
    \end{align*}

An oracle machine $M$ that distinguishes between an $n$:$kn$-UFN2 permutation
generator and an ideal random permutation generator can be built as follows.
First, the oracle machine queries $x_p$ and $x_q$ and receives responses
$y_p$ and $y_q$. If the equation $(L_{p,1}^{2k} \oplus \cdots \oplus
L_{p,k}^{2k}) \oplus (L_{q,1}^{2k} \oplus \cdots \oplus L_{q,k}^{2k}) \oplus
(L_{p,1}^0 \oplus L_{q,1}^0) \oplus (W_p \oplus W_q) = 0$ is satisfied, then
the oracle machine outputs $1$. Otherwise, it outputs $0$. Thus, if the
oracle response is generated by the $n$:$kn$-UFN2, then the output of the
oracle machine is always $1$. However, if the oracle response is generated by
the ideal random permutation generator, then the probability that the
equation is satisfied is $1/2^n$. Therefore the $n$:$kn$-UFN2 permutation
generator is not pseudo-random since the oracle machine $M$ can distinguish
two permutation generators.
\end{proof}

We now prove that a $2k+1$ rounds $n$:$kn$-UFN2 permutation generator using
pseudo-random functions is pseudo-random. First, we define an event that can
be used to distinguish a $2k+1$ rounds $n$:$kn$-UFN2 using ideal random
functions from an ideal random permutation generator as the following BAD
event.

\begin{definition}(The BAD event $\xi$ of $2k+1$ rounds $n$:$kn$-UFN2)
\label{def:bad-2k+1-nkn-ufn2} %
A random variable $\xi^i$ is defined as an event in which $R_{p}^i$ and
$R_{q}^i$ of two oracle query indexes $p$ and $q$ are equal where $1 \leq p <
q \leq m$. The BAD event $\xi$ is a random variable defined as
$\vee_{i=k}^{2k} \xi^i$.
\end{definition}

If an oracle machine is able to distinguish between a $2k+1$ rounds
$n$:$kn$-UFN2 and an ideal random permutation generator, then the BAD event
must occur. This means that if the BAD event does not occur, then the $2k+1$
rounds $n$:$kn$-UFN2 is equal to an ideal random permutation generator. In
the following lemma, we prove it.

\begin{lemma} \label{lem:nkn_ufn2_not_bad}
A $2k+1$ rounds $n$:$kn$-UFN2 permutation generator using an ideal random
function generator is equal to an ideal random permutation generator if $k$
is odd and the BAD event does not occur. That is, for all possible $\sigma_1,
\sigma_2, \ldots, \sigma_m \in \bits^{(k+1)n}$, we have
    $$\Pr (\wedge_{i=1}^m (y_i = \sigma_i) | \neg \xi) = 1/2^{(k+1)nm}$$
where $y_i$ is the output of the $2k+1$ rounds $n$:$kn$-UFN2 permutation
generator.
\end{lemma}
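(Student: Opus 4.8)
The plan is to follow the Luby--Rackoff template already used in Lemma~\ref{lem:nkn_ufn_not_bad}: isolate the function outputs of the last $k+1$ rounds as ``fresh'' randomness, argue that conditioning on $\neg\xi$ keeps them jointly uniform, and then show that these fresh outputs are mapped \emph{bijectively} onto the $(k+1)n$-bit ciphertext. The bijectivity is exactly where the hypothesis that $k$ is odd enters, and it is the crux of the proof.

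First I would record how values travel through the network. Writing $g_i \eqdef f_i(R^{i-1})$ for the round-$i$ output of the ideal random function, the round map reads $L_1^{i+1}=R^i$, $L_j^{i+1}=L_{j-1}^i\oplus g_{i+1}$ for $2\le j\le k$, and $R^{i+1}=L_k^i\oplus g_{i+1}$. Thus a value occupying the source position reappears there $k+1$ rounds later, shifted and XOR-ed with the intervening outputs, giving the recurrence $R^{i}=R^{i-k-1}\oplus\bigoplus_{l=i-k+1}^{i}g_l$. Unwinding the last round together with this recurrence, each of the $k+1$ output blocks of a query can be written as an ``early'' term (fixed by the input and by $f_1,\dots,f_k$) XOR-ed with an $\mathbb{F}_2$-linear combination of the fresh outputs $u_1\eqdef g_{k+1},\dots,u_{k+1}\eqdef g_{2k+1}$. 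A short index computation shows that output block $j$ receives every $u_m$ except $u_{k+2-j}$, for all $j=1,\dots,k+1$, where block $k+1$ denotes $R^{2k+1}$.

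Next I would justify the freshness. As the BAD event is $\xi=\vee_{i=k}^{2k}\xi^i$, conditioning on $\neg\xi$ forces the source values $R_p^{i}$ for $i\in\{k,\dots,2k\}$ to be pairwise distinct over the $m$ queries; equivalently the inputs $R_p^{i-1}$ fed to each of $f_{k+1},\dots,f_{2k+1}$ are pairwise distinct. Since these are independent ideal random functions evaluated at distinct points, the whole family $\{u_1^{(p)},\dots,u_{k+1}^{(p)}\}_{p=1}^m$ is jointly uniform and independent. The delicate point is the familiar circularity that $\neg\xi^i$ for $i>k$ is itself an event about these very outputs; I would resolve it as in Luby--Rackoff, first conditioning on the early randomness $f_1,\dots,f_k$ (which already secures $\neg\xi^k$, i.e.\ distinct $R^k$) and then revealing the fresh functions round by round while checking that the conditional law of the not-yet-revealed outputs stays uniform. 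I expect this bookkeeping to be the main technical obstacle, though it is routine rather than deep.

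Finally comes the linear-algebra core. For a single query the map $(u_1,\dots,u_{k+1})\mapsto(L_1^{2k+1},\dots,L_k^{2k+1},R^{2k+1})$ is affine, and by the description above its linear part is the $(k+1)\times(k+1)$ matrix $M=J+P$ over $\mathbb{F}_2$, where $J$ is all-ones and $P$ is the permutation matrix of the reversal $j\mapsto k+2-j$ (row $j$ of $M$ is the all-ones vector with its single zero in column $k+2-j$, and the omitted columns $k+1,k,\dots,1$ form a genuine permutation). Since every row of $J$ is the all-ones row, $P^{-1}J=J$, so $M=P(I+J)$, and the matrix determinant lemma gives $\det(I+J)=1+\mathbf{1}^{\top}\mathbf{1}=1+(k+1)\equiv k \pmod{2}$. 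Therefore $M$ is invertible over $\mathbb{F}_2$ exactly when $k$ is odd, which is the standing hypothesis (and this matches the even-$k$ failure of Theorem~\ref{thm:even_nkn_ufn2_not_prp}, where $\bigoplus$ of all blocks is an invariant, i.e.\ a nonzero kernel vector of $M$). With $M$ invertible, for each fixed early term the affine map is a bijection of $\bits^{(k+1)n}$, so uniform fresh outputs give a uniformly distributed output for each query; combining per-query uniformity with independence across the $m$ queries yields $\Pr(\wedge_{i=1}^m(y_i=\sigma_i)\mid\neg\xi)=2^{-(k+1)nm}$, as claimed.
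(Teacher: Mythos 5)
Your proposal follows essentially the same route as the paper's proof: both express each output block as an affine function of the $k+1$ fresh round-function outputs, arrive at the same $(k+1)\times(k+1)$ matrix over $\mathbb{F}_2$ whose only zeros sit on the anti-diagonal, and reduce the claim to the invertibility of that matrix, which holds exactly when $k$ is odd. The only differences are cosmetic: you compute the determinant via the rank-one-update identity $\det(I+J)=1+(k+1)\equiv k \pmod 2$ where the paper performs an explicit row reduction to echelon form, and your handling of the conditioning on $\neg\xi$ (revealing the early rounds first, then the fresh functions) is if anything more careful than the paper's.
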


\begin{proof}
If the response $y_p$ of the oracle query $x_p$ is generated by an
$n$:$kn$-UFN2 permutation generator, then the oracle response is calculated
by the following algorithm. In this case, $x_p = (L_{p,1}^0 \| L_{p,2}^0 \|
\cdots \| L_{p,k}^0 \| R_p^0)$ and $y_p = (L_{p,1}^{2k+1} \| L_{p,2}^{2k+1}
\| \cdots \| L_{p,k}^{2k+1} \| R_p^{2k+1})$.
    \begin{align*}
    & \textbf{input } (L_{p,1}^0 \| L_{p,2}^0 \| \cdots \| L_{p,k}^0 \| R_p^0) \\
    & \textbf{select } h_p^1, h_p^2, \ldots, h_p^{2k+1} \in
            \text{ uniform distribution} \\
    & \textbf{for } 1 \leq i \leq 2k+1 \textbf{ do } \\
    &\quad \ell \leftarrow \text{ min } q: 1 \leq q \leq p \text{ and }
            R_q^{i-1} = R_p^{i-1} \\
    &\quad L_{p,1}^i \leftarrow R_p^{i-1} \\
    &\quad L_{p,2}^i \leftarrow L_{p,1}^{i-1} \oplus h_\ell^{i} \\
    &\quad \vdots \db \\
    &\quad L_{p,k}^i \leftarrow L_{p,k-1}^{i-1} \oplus h_\ell^{i} \db \\
    &\quad R_{p}^i \leftarrow L_{p,k}^{i-1} \oplus h_\ell^{i} \\
    & \textbf{end for} \\
    & \textbf{output } (L_{p,1}^{2k+1} \| L_{p,2}^{2k+1} \| \cdots \|
            L_{p,k}^{2k+1} \| R_p^{2k+1}).
    \end{align*}
By using this algorithm, the oracle response $y_p$ can be represented by
$h_p^k, h_p^{k+1}, \ldots, h_p^{2k+1}$ and the intermediate value
$(L_{p,1}^{k-1} \| L_{p,2}^{k-1} \| \cdots \| L_{p,k}^{k-1} \| R_p^{k-1})$
that is used in the algorithm as follows.
    \begin{align*}
    \left( \begin{array}{c}
    L_{p,1}^{k-1} \\ L_{p,2}^{k-1} \\ L_{p,3}^{k-1} \\ \cdots \\
    L_{p,k-1}^{k-1} \\ L_{p,k}^{k-1} \\ R_{p}^{k-1} \\
    \end{array} \right)
    \oplus
    \left( \begin{array}{ccccccc}
    1 & 1 & 1 & \cdots & 1 & 1 & 0 \\
    1 & 1 & 1 & \cdots & 1 & 0 & 1 \\
    1 & 1 & 1 & \cdots & 0 & 1 & 1 \\
    \multicolumn{7}{c}{\dotfill} \\
    1 & 1 & 0 & \cdots & 1 & 1 & 1 \\
    1 & 0 & 1 & \cdots & 1 & 1 & 1 \\
    0 & 1 & 1 & \cdots & 1 & 1 & 1 \\
    \end{array} \right)
    \left( \begin{array}{c}
    h_p^k \\ h_p^{k+1} \\ h_p^{k+2} \\ \cdots \\
    h_p^{2k-1} \\ h_p^{2k} \\ h_p^{2k+1} \\
    \end{array} \right)
    =
    \left( \begin{array}{c}
    L_{p,1}^{2k+1} \\ L_{p,2}^{2k+1} \\ L_{p,3}^{2k+1} \\ \cdots \\
    L_{p,k-1}^{2k+1} \\ L_{p,k}^{2k+1} \\ R_{p}^{2k+1} \\
    \end{array} \right).
    \end{align*}
Let $y^{k-1} = (L_{p,1}^{k-1}, \cdots, L_{p,k}^{k-1}, R_p^{k-1})$, $y^{2k+1}
= (L_{p,1}^{2k+1}, \cdots, L_{p,k}^{2k+1}, R_p^{2k+1})$, and $x = (h_p^k,
h_p^{k+1}, \cdots, h_p^{2k+1})$ be column vectors. Let $A$ be a $(k+1) \times
(k+1)$ matrix in which only the elements $A_{i,k+2-i}$ are $0$ and the other
elements are $1$. Then the above equation can be represented as a linear
relation $y^{k-1} \oplus A x = y^{2k+1}$. If we let $y = y^{k-1} \oplus
y^{2k+1}$, then the linear relation can be represented as $Ax = y$. This is
equal to a function $T: I_{(k+1)n} \rightarrow I_{(k+1)n}$ such that $T(x) =
Ax$.

Now, we show that the output $y$ of $T(x) = Ax$ has a uniform probability
distribution. The value $x$ ​​has a uniform probability distribution since the
$n$:$kn$-UFN2 permutation generator uses ideal random functions. That is, the
probability that $x_i$ is chosen in $I_{(k+1)n}$ is $1/2^{(k+1)n}$.
Therefore, if the function $T$ is a one-to-one correspondence function, then
the output $y_i$ value of the function $T$ also has a probability value
$1/2^{(k+1)n}$ equal to $x_i$. If the function $T$ is a one-to-one function,
then $T$ is a one-to-one correspondence function since the domain and region
of $T$ are the same.

If the function $T(x) = y$ is a one-to-one correspondence, then the inverse
of $T$ must exist. That is, the inverse matrix $A^{-1}$ of the matrix $A$
must exist since $T(x) = Ax$. If the determinant of $A$ is not zero, then
there exists an inverse matrix. So we should show that $det (A) \neq 0$.

In order to show that the determinant of the square matrix $A$ is not zero,
we convert $A$ to an echelon form $A'$. At this time, only row addition and
row interchange are used. If the echelon form $A'$ contains a row of zero,
then $det(A) = 0$, otherwise $det(A) \neq 0$ \cite{FraleighB95}. Thus, we
should show that $A'$ does not contain a row of zero.

First, we select an element in $(1,1)$ as the first pivot element and perform
row addition. Then we have the following matrix. We also select an element in
$(i, k+2-i)$ as $i$th pivot element for $2 \leq i \leq k$.
    \begin{align*}
    \left( \begin{array}{ccccccc}
    1 & 1 & 1 & \cdots & 1 & 1 & 0 \\
      &   &   &        &   & \underline{1} & 1 \\
      & O &   &        & \underline{1} &   & 1 \\
      &   &   & \iddots &   &   &   \\
      &   & \underline{1} &        & O &   & 1 \\
      & \underline{1} &   &        &   &   & 1 \\
    0 & 1 & 1 & \cdots & 1 & 1 & 1 \\
    \end{array} \right).
    \end{align*}
The above matrix contains pivot elements for all columns except the $k+1$
column. Therefore, the element at position $(k+1, k+1)$ should be selected as
the $k+1$th pivot. To do so, all values ​​except the $k+1$th element of the
$k+1$th row vector must be set to zero. Let $r_i$ be the $i$th row vector of
the above matrix. The $k+1$th pivotal element is obtained by performing row
addition $r_2 \oplus r_3 \oplus \cdots \oplus r_k \oplus r_{k+1}$. By the
property  of the XOR operation, the result of row addition is $(0, 0, \ldots,
0, 0, X)$, where $X = 0$ if $k$ is even and $X = 1$ if $k$ is odd depending
on the property of the XOR. An echelon form $A'$ is obtained by performing
row exchange so that each pivot is located diagonally in the matrix.

Thus, when $k$ is an odd number, the echelon form matrix $A'$ does not
include a row of zero. Therefore, if $k$ is odd, the output of $2k+1$ rounds
$n$:$kn$-UFN2 has a uniform probability distribution.
\end{proof}

\begin{lemma} \label{lem:nkn_ufn2_bad}
The probability of the BAD event in a $2k+1$ rounds $n$:$kn$-UFN2 permutation
generator is bounded by
    $$\Pr(\xi) \leq (k+1)m^2 / 2^{n+1}.$$
\end{lemma}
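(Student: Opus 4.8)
The plan is to reduce the BAD event to a union of elementary collision events and bound each one separately, exactly in the spirit of the earlier Lemma~\ref{lem:nkn_ufn_bad}. By Definition~\ref{def:bad-2k+1-nkn-ufn2} we have $\xi = \vee_{i=k}^{2k}\xi^i$, a disjunction of $k+1$ events, and each $\xi^i$ asserts that $R_p^i = R_q^i$ for \emph{some} pair of query indices $1 \le p < q \le m$. So I would first apply the union bound twice: once over the $k+1$ values $i \in \{k,\dots,2k\}$, and once over the ${}_mC_2$ query pairs, to obtain $\Pr(\xi) \le \sum_{i=k}^{2k} {}_mC_2 \cdot \Pr(R_p^i = R_q^i)$ for a fixed pair $p,q$. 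It then suffices to show that a single pair collides in its source block with probability at most $1/2^n$, since ${}_mC_2 \le m^2/2$ would give $\Pr(\xi^i) \le m^2/2^{n+1}$, and summing the $k+1$ terms yields the claimed $(k+1)m^2/2^{n+1}$.

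To bound $\Pr(R_p^i = R_q^i)$ I would use the round recurrence $R^i = L_k^{i-1} \oplus f_i(R^{i-1})$ read off from Definition~\ref{def:nkn-ufn2} (and made explicit in the algorithm of Lemma~\ref{lem:nkn_ufn2_not_bad}), and case split on the previous source values. If $R_p^{i-1} \ne R_q^{i-1}$, then because the network uses an ideal random function generator the outputs $f_i(R_p^{i-1})$ and $f_i(R_q^{i-1})$ are independent and uniform, so $R_p^i \oplus R_q^i$ is uniform and the collision occurs with probability exactly $1/2^n$. If instead $R_p^{i-1} = R_q^{i-1}$, then $f_i$ cancels and the collision forces $L_{p,k}^{i-1} = L_{q,k}^{i-1}$; I would unroll this one round further, so the requirement becomes a coincidence of a longer suffix of the state at round $i-1$. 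Iterating, the ``coinciding source'' branch drives the analysis backward through the rounds, and the base case is that the $m$ oracle queries are pairwise distinct, so a coincidence of the \emph{entire} state at round $0$ has probability $0$; this terminates the recursion.

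I expect the delicate step to be the backward recursion in the coinciding-source branch rather than the bookkeeping of the two union bounds. The difficulty is specific to the UFN2 structure: by Definition~\ref{def:nkn-ufn2} every target block in a round is masked by the \emph{same} value $f_i(R^{i-1})$, so the multi-block coincidences that appear as the matched suffix grows are correlated through a common random term, in contrast to the $n$:$kn$-UFN of Lemma~\ref{lem:nkn_ufn_bad}, where the independent output segments $C_1(f),\dots,C_k(f)$ produced a clean geometric sum $1/2^n + \cdots + 1/2^{kn}$. The main work is therefore to verify that, along each step of the recursion, the $f_i$ term cancels between the two matched blocks so that every additional block-equality is either pinned down with probability $1/2^n$ or forced by an XOR relation among adjacent blocks carried over from the previous round, keeping the per-pair collision probability at $1/2^n$. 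Once this is established, the two union bounds assemble directly into $\Pr(\xi) \le (k+1)m^2/2^{n+1}$.
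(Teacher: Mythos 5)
Your overall architecture is the same as the paper's: decompose $\xi$ by a union bound over the $k+1$ indices $i\in\{k,\dots,2k\}$ and over the ${}_mC_2$ query pairs, then bound the per-pair collision probability $\Pr(R_p^i=R_q^i)$ by a backward recursion on suffix coincidences, terminating because the $m$ oracle queries are pairwise distinct. You also correctly put your finger on the one place where the UFN2 differs from the $n$:$kn$-UFN of Lemma~\ref{lem:nkn_ufn_bad}: every target block in a round is masked by the \emph{same} value $f_i(R^{i-1})$, so the multi-block suffix coincidences are correlated through a common random term and the clean geometric sum $1/2^n+\cdots+1/2^{kn}$ from the earlier lemma is not automatically available. (The paper's own proof in fact transplants that geometric sum without addressing this point.)

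The gap is that you stop exactly at the step you yourself call ``the main work,'' and the assertion you substitute for it --- that the per-pair collision probability stays at $1/2^n$ --- is not delivered by the recursion you describe. Let $E_j$ denote the coincidence of the $(j+1)$-block suffix at level $i-j$; then $E_{j+1}\Rightarrow E_j$, so $\Pr(E_0)=\sum_{j=0}^{k-1}\Pr(E_j\wedge\neg E_{j+1})+\Pr(E_k)$. In the branch $\neg E_{j+1}$ with $R_p^{i-j-1}\neq R_q^{i-j-1}$, the shared mask means $E_j$ requires all $j+1$ cross-query block differences at level $i-j-1$ to equal the single uniform value $\Delta=f_{i-j}(R_p^{i-j-1})\oplus f_{i-j}(R_q^{i-j-1})$: this has probability $1/2^n$ when those differences already agree with one another, and $0$ otherwise --- it is never $1/2^{(j+1)n}$. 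Moreover the auxiliary condition ``all differences within the suffix are equal'' reduces to equalities of adjacent-block XORs, in which the common masks cancel, so it propagates deterministically back toward the inputs and can be pre-arranged by the adversary's choice of queries. Each level of the recursion therefore contributes a fresh, essentially unconditional $1/2^n$ from a \emph{different} round function, and the sum is about $k/2^n$ per pair, giving $k(k+1)m^2/2^{n+1}$ rather than the stated $(k+1)m^2/2^{n+1}$. To close the proof along your lines you would need either a genuine argument that the deeper events $E_j$ decay geometrically (which the shared mask appears to preclude), or an acceptance of the extra factor of $k$; as written, the deferred step does not go through.
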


\begin{proof}
For the proof, we should show that $\Pr(\xi^i) \leq m^2 / 2^{n+1}$ since
$\Pr(\xi) = \vee_{i=k}^{2k} \Pr(\xi^i)$ by the definition of the BAD event
$\xi$.

By the definition of $\xi^i$, we have $\Pr(\xi^i) = \Pr(R_p^i = R_q^i)$ for
the oracle query indexes $p$ and $q$ with $1 \leq p < q \leq m$. Thus we
obtain the following equation
    \begin{align*}
    \Pr(R_p^i = R_q^i)
    &=
    \left\{ \begin{array}{ll}
        \Pr(L_{p,k}^{i-1} \| R_p^{i-1} = L_{q,k}^{i-1} \| R_q^{i-1}) &
        \text{ if } L_{p,k}^{i-1} \| R_p^{i-1} = L_{q,k}^{i-1} \| R_q^{i-1} \\
        1/2^n & \text{ otherwise. }
    \end{array} \right.
    \end{align*}
Similarly, we can obtain the following equation
    \begin{align*}
    & \Pr(L_{p,k-j+1}^{i-j} \| \cdots \| R_p^{i-j} =
        L_{q,k-j+1}^{i-j} \| \cdots \| R_q^{i-j}) \\
    &=
    \left\{ \begin{array}{ll}
        \Pr(L_{p,k-j}^{i-j-1} \| \cdots \| R_p^{i-j-1} =
            L_{q,i}^{k-j} \| \cdots \| R_q^{i-j}) &
            \text{ if }
            L_{p,k-j}^{i-j-1} \| \cdots \| R_p^{i-j-1} =
                L_{q,k-j}^{i-j-1} \| \cdots \| R_q^{i-j-1} \\
        1/2^{(j+i)n} & \text{ otherwise }
    \end{array} \right.
    \end{align*}
Since each oracle query should be different, we have $\Pr(L_{p,1}^0 \| \cdots
\| L_{p,k}^0 \| R_p^0 = L_{q,1}^0 \| \cdots \| L_{q,k}^0 \| R_q^0) = 0$ for
oracle query indexes $p$ and $q$ with $1 \leq p < q \leq m$. Thus, we have
    \begin{align*}
    \Pr(\xi^i)
    &=   {}_mC_2 \cdot \Pr(R_p^i = R_q^i) \\
    &=   {}_mC_2 \left( \frac{1}{2^n} + \cdots + \frac{1}{2^{kn}} +
        (i-k) \frac{1}{2^{(k+1)n}} \right)
    \leq \frac{m^2}{2^{n+1}}.
    \end{align*}
Therefore, we have $\Pr(\xi) = \sum_{i=k}^{2k} m^2 / 2^{n+1} = (k+1) m^2 /
2^{n+1}$.
\end{proof}

In Theorem \ref{thm:2k+1_nkn_ufn2_prp}, we prove that a $2k+1$ rounds
$n$:$kn$-UFN2 that uses ideal random functions is pseudo-random from the
above two lemmas, and we also prove that a $2k+1$ rounds $n$:$kn$-UFN2 that
uses pseudo-random functions is also pseudo-random.

\begin{theorem} \label{thm:2k+1_nkn_ufn2_prp}
A $2k+1$ rounds $n$:$kn$-UFN2 permutation generator using a pseudo-random
function generator is a pseudo-random permutation generator.
\end{theorem}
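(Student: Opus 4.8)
The plan is to follow the two-stage template established in the proofs of Theorem \ref{thm:k+2_knn_ufn_prp} and Theorem \ref{thm:k+2_nkn_ufn_prp}. In the first stage I would show that the $2k+1$ rounds $n$:$kn$-UFN2 permutation generator $P$ built from an \emph{ideal} random function generator is already indistinguishable from an ideal random permutation generator $K$. Conditioning the distinguishing advantage of any probabilistic polynomial-time oracle machine $M$ on whether the BAD event $\xi$ occurs, Lemma \ref{lem:nkn_ufn2_not_bad} gives that the two output distributions coincide exactly on $\neg\xi$, so that conditional term vanishes; here the hypothesis that $k$ is odd is essential, since it is precisely what makes the transition matrix $A$ invertible and hence the output uniform. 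The complementary term is bounded trivially by $\Pr(\xi)$, and Lemma \ref{lem:nkn_ufn2_bad} then bounds this by $(k+1)m^2/2^{n+1}$, which is negligible whenever $m$ is polynomial and $k$ is, say, constant or polynomially bounded in $n$.

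In the second stage I would lift this to the pseudo-random case by a hybrid argument identical in form to the earlier theorems. For $0 \leq i \leq 2k+1$, define the hybrid generator that uses ideal random functions in the first $i$ rounds and pseudo-random functions in the remaining $2k+1-i$ rounds, and let $p_i^D$ be the probability that $M$ outputs $1$ on this hybrid, so that $p_0^D$ is the fully pseudo-random construction and $p_{2k+1}^D$ is the fully ideal one. Assuming for contradiction that $P$ is distinguishable from $K$ with advantage at least $1/n^c$, the triangle inequality $1/n^c \leq |p^K - p_0^D| \leq |p^K - p_{2k+1}^D| + \sum_{i} |p_{i+1}^D - p_i^D|$, together with the negligible bound on $|p^K - p_{2k+1}^D|$ from the first stage, forces some adjacent pair to satisfy $|p_{i+1}^D - p_i^D| \geq 1/((2k+2)n^c)$.

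Finally I would turn such an index $i$ into a distinguisher $C$ against the pseudo-random function generator: $C$ simulates the hybrid but computes the $i$th round using its own oracle $X$ in place of the round function, so that $C^F$ realizes $p_i^D$ and $C^H$ realizes $p_{i+1}^D$ when $X$ is drawn from the pseudo-random respectively the ideal function generator. The resulting advantage $1/((2k+2)n^c)$ contradicts the pseudo-randomness of the function generator, completing the proof. The main obstacle I anticipate is bookkeeping rather than a new idea: one must ensure that the number of hybrid steps $2k+1$ stays polynomial so that $1/((2k+2)n^c)$ remains non-negligible, and one must verify that $C$ runs in polynomial time despite having to evaluate the other $2k$ rounds on each of the $m$ queries. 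The genuinely delicate point has already been discharged inside Lemma \ref{lem:nkn_ufn2_not_bad}, whose odd-$k$ hypothesis is exactly the condition this theorem inherits.
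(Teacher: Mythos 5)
Your proposal matches the paper's own proof essentially step for step: the same conditioning on the BAD event via Lemmas \ref{lem:nkn_ufn2_not_bad} and \ref{lem:nkn_ufn2_bad} to get the $(k+1)m^2/2^{n+1}$ bound for the ideal-function case, and the same hybrid argument with $2k+2$ steps yielding a PRF distinguisher with advantage $1/((2k+2)n^c)$. Your remark that the odd-$k$ hypothesis is inherited from Lemma \ref{lem:nkn_ufn2_not_bad} is correct and, if anything, makes the dependence more explicit than the paper does.
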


\begin{proof}
From Lemma \ref{lem:nkn_ufn2_not_bad} and Lemma \ref{lem:nkn_ufn2_bad}, we
can show that an $n$:$kn$-UFN2 permutation generator $P$ that uses an ideal
random function generator is a pseudo-random permutation generator. First, we
have $| \Pr (M^P(1^{(k+1)n}) = 1 | \neg \xi) - \Pr (M^K (1^{(k+1)n}) = 1) | =
0$ since $P$ is the same as an ideal random permutation generator $K$ when
the BAD event $\xi$ does not occur by Lemma \ref{lem:nkn_ufn2_not_bad}. We
also have $| \Pr (M^P(1^{(k+1)n}) = 1 | \xi) - \Pr (M^K (1^{(k+1)n}) = 1) |
\leq 1$ since the absolute value of the probability difference is less than
1. Therefore, we have the following equation
    \begin{align*}
    &   \big| \Pr (M^P(1^{(k+1)n}) = 1) - \Pr (M^K(1^{(k+1)n}) = 1) \big| \\
    &=  \big| \Pr (M^P(1^{(k+1)n}) = 1 | \xi) - \Pr (M^K(1^{(k+1)n}) = 1) \big|
        \cdot \Pr(\xi) + \\
    &\quad~
        \big| \Pr (M^P(1^{(k+1)n}) = 1 | \neg \xi) - \Pr (M^K(1^{(k+1)n}) = 1) \big|
        \cdot \Pr(\neg \xi) \\
    &\leq \Pr(\xi)
     \leq (k+1)m^2 / 2^{n+1}.
    \end{align*}

We now show that a $2k+1$ rounds $n$:$kn$-UFN2 permutation generator $P$
using a pseudo-random function generator is a pseudo-random permutation
generator. For this, we use the proof by contradiction. That is, if a $2k+1$
rounds $n$:$kn$-UFN2 using an ideal random function generator is
pseudo-random but a $2k+1$ rounds $n$:$kn$-UFN2 using a pseudo-random
function generator is not pseudo-random, then we can derive a contradiction
to the pseudo-randomness of the pseudo-random function generator.

Suppose that an $n$:$kn$-UFN2 permutation generator $P$ using a pseudo-random
function generator is not pseudo-random. Then there exists an oracle machine
$M$ which distinguishes an ideal random permutation generator $K$ and $P$
with a probability greater than $1/n^c$ for a constant $c$.

First, we let $D_{f_{2k+1}} \circ \cdots \circ D_{f_{i+1}} \circ D_{h_i}
\circ \cdots \circ D_{h_1}(\cdot)$ be a permutation generator in which an
ideal random function generator is used from the first round to the $i$th
round and a pseudo-random function generator is used from the $i+1$th round
to the $2k+1$th round in an $n$:$kn$-UFN2 permutation generator for $i$ with
$0 \leq i \leq 2k+1$. Let $p_i^D$ be the probability that an oracle machine
that has access to this permutation generator will output $1$. That is,
    \begin{align*}
    p_i^D
    =   \Pr (M^{D_{f_{2k+1}} \circ \cdots \circ D_{f_{i+1}} \circ D_{h_i}
        \circ \cdots \circ D_{h_1}} (1^{(k+1)n}) = 1)
    \end{align*}
where $f_{i+1}, \ldots, f_{2k+1}$ are functions generated by a pseudo-random
function generator and $h_1, \ldots, h_i$ are functions generated by an ideal
random function generator. Let $p^K$ be the probability that an oracle
machine that has access to the ideal random permutation generator will output
$1$. Since we supposed that the $2k+1$ rounds $n$:$kn$-UFN2 using a
pseudo-random function generator is not pseudo-random, we get the following
equation
    \begin{align*}
    1/n^c
    &\leq | p^K - p_0^D | \\
    &\leq | p^K - p_{2k+1}^D | + | p_{2k+1}^D - p_{2k}^D | + \cdots +
          | p_{i+1}^D - p_{i}^D | + \cdots + | p_1^D - p_0^D |.
    \end{align*}

However, since the $2k+1$ rounds $n$:$kn$-UFN2 using an ideal random function
generator has been shown to be a pseudo-random permutation generator, we have
$| P^K - P_{2k+1}^D | \leq (k+1)m^2 / 2^{n+1}$. Therefore, we have $|
p_{i+1}^D - p_{i}^D | \geq 1 / (2k+2)n^c$ for some $i$. By using this, an
oracle machine $C$ that distinguishes an ideal random function generator and
a pseudo-random function generator with a probability higher than $1 /
(2k+2)n^c$ can be constructed as follows.

In the oracle machine $C$, we first change the part of $C$ that calculates
the oracle query reply to $D_{f_{2k+1}} \circ \cdots \circ D_{f_{i+1}} \circ
D_X \circ D_{h_{i-1}} \circ \cdots \circ D_{h_1} (\cdot)$. In this case, $X$
is a function whose input is $n$ bits and whose output is $n$ bits. If $X$ in
$C$ is a function generated by a pseudo-random function generator $F$, then
we have $p_{i}^D = \Pr (C^F (1^{kn}) = 1)$. On the other hand, if $X$ in $C$
is a function generated by an ideal random function generator $H$, then we
have $p_{i+1}^D = \Pr (C^H (1^{kn}) = 1)$. However, it is possible to
distinguish the ideal random function generator and the pseudo-random
function generator with a probability greater than $1/(2k+2)n^c$ since $|
p_{i+1}^D - p_{i}^D | \geq 1/(2k+2)n^c$. But this contradicts that the
pseudo-random function generator is pseudo-random. Therefore, the $2k+1$
rounds $n$:$kn$-UFN2 using a pseudo-random function generator is a
pseudo-random permutation generator.
\end{proof}

\chapter{Comparison of Unbalanced Feistel Networks} \label{chap:comp-ufn}

In this chapter, we compare the amount of memory required for implementation
and the time of a pseudo-random permutation generator when implementing the
pseudo-random permutation generator using a balanced Feistel network,
unbalanced Feistel networks, and the newly proposed Feistel network.

To implement a pseudo-random permutation generator, we must implement a
pseudo-random function used in a Feistel network. However, the implementation
cost of the pseudo-random function and the speed of the pseudo-random
function greatly affect the cost and speed of the Feistel network. Therefore,
depending on how to implement the pseudo-random function, the comparison
value can vary greatly. Therefore, in this chapter, we will compare the
method that can most simply implement the pseudo-random function and the
method that uses the  GGM pseudo-random function generator. In other words,
the cost and the execution speed of each pseudo-random permutation generator
are investigated by using these two methods.

\section{Using Memory and Pseudo-Random Bit Generator}

The simplest way to implement a pseudo-random function is to use memory and a
pseudo-random bit generator. That is, if the input value $x$ of a function is
given, the output value $y$ is calculated by using a pseudo-random bit
generator. At this time, the initial value of the pseudo-random bit generator
uses an ideal random value such as coin tosses. However, a pseudo-random
function must give the same output value for the same input value. Thus, the
input value and the output value are stored in memory. That is, given the
same input value as before, the output is obtained by referring to the memory
instead of using the pseudo-random bit generator.

Before describing this algorithm, we assume that the size of the key that
specifies a pseudo-random function used in the $i$th round of a Feistel
network is $\ell$ bits. We also assume that the input of a pseudo-random bit
generator $G$ is $\ell$ bits and the output is $P_2$ bit. That is, $G:
I_{\ell} \rightarrow I_{P_2}$. An algorithm that implements a pseudo-random
function $F_{k_i}: I_{P_1} \rightarrow I_{P_2}$ whose input is $P_1$ bits and
output is $P_2$ bits is as follows.
\begin{algorithm}
\caption{Calculate $y = F_{k_i}(x)$} \label{alg:prf-from-mem-prbg}
\begin{algorithmic}[1]
    \Require{$|x| = P_1$ and $|y| = P_2$}
    \Ensure{$y = F_{k_i}(x)$}
    \Procedure{}{}
        \If {$mem[i][x] = \text{null}$}
            \State $\text{generate} \ell-\text{bit string } s \text{ using
                   coin-toss}$.
            \State $mem[i][x] \gets G(s)$.
        \EndIf
        \State $y \gets mem[i][x]$.
    \EndProcedure
\end{algorithmic}
\end{algorithm}

Since the input of the pseudo-random function is $P_1$ bits and the output is
$P_2$ bits, the amount of memory required to implement a pseudo-random
function is $2^{P_1} P_2$ bits. Therefore, an $r$ rounds Feistel network
using pseudo-random functions requires $r \cdot 2^{P_1} P_2$ bits memory. The
time to calculate the output value of a pseudo-random function is
proportional to the time to calculate the output of a pseudo-random bit
generator. Thus, the execution time of a pseudo-random function whose output
is $P_2$ bits is proportional to the $P_2$ value of the output size.
Therefore, the execution time of the $r$ rounds Feistel network is
proportional to $r \cdot P_2$ value.

\begin{table*}
\caption{Comparison of PRPs using memory and pseudo-random bit generators}
\label{tab:comp-prp-mem-prbg}
\vs \small \addtolength{\tabcolsep}{12.0pt}
\renewcommand{\arraystretch}{1.4}
\newcommand{\otoprule}{\midrule[0.09em]}
    \begin{tabularx}{6.50in}{lcccc}
    \toprule
        & Balanced & \multicolumn{2}{c}{Unbalanced Feistel network} & Proposed structure \\
        & Feistel network & $kn$:$n$-UFN & $n$:$kn$-UFN & $n$:$kn$-UFN2 \\
    \midrule
    Number of rounds & 3                & $k+2$         & $k+2$       & $2k+1$  \\
    \midrule
    Memory size      & $2^{(k+1)n/2}kn$ & $2^{kn} kn$   & $2^n k^2 n$ & $2^n kn$ \\
    Relative ratio   & $2^{(k-1)n/2}$   & $2^{(k-1)n}$  & $k$         & $1$ \\
    \midrule
    Running time     & $1.5 kn$         & $kn$          & $k^2 n$     & $2kn$ \\
    Relative ratio   & $1.5$            & $1$           & $k$         & $2$ \\
    \bottomrule
    \end{tabularx}
\end{table*}

In case of implementing a pseudo-random function by using the algorithm 1,
the memory size required for each pseudo-random permutation generator and the
execution time of each Feistel network are shown in Table
\ref{tab:comp-prp-mem-prbg}. In the total memory size, the pseudo-random
permutation generator using a newly proposed structure ($n$:$kn$-UFN2)
requires the least memory and the pseudo-random permutation generator using a
source-heavy unbalanced Feistel network ($kn$:$n$-UFN) requires the most
memory. In the execution time of the whole Feistel network, the pseudo-random
permutation generator using a source-heavy unbalanced Feistel network
($kn$:$n$-UFN) is the fastest and the pseudo-random permutation generator
using a target-heavy unbalanced Feistel network ($n$:$kn$-UFN) is the
slowest.

\section{Using the GGM Pseudo-Random Function}

Another way to implement a pseudo-random function is to use the pseudo-random
function generator of Goldreich, Goldwasser, and Micali. A major advantage of
the GGM pseudo-random function generator is that it can implement a
pseudo-random function without storing previously generated pseudo-random
bits.

Let $\ell$ be the size of the entire key of a pseudo-random permutation
generator. Then the $i$th pseudo-random function of the pseudo-random
permutation generator from an $r$ rounds Feistel network will use $\ell / r$
bits of the key. We first prepare two pseudo-random bit generators $G:
I_{\ell/r} \rightarrow I_{2\ell/r}$ and $G': I_{\ell/r} \rightarrow I_{P_2}$.
Let $G_0$ be the first $\ell/r$ bits of $G$, and $G_1$ be the next $\ell/r$
bits of $G$. A pseudo-random function $F: I_{P_1} \rightarrow I_{P_2}$ whose
the key size is $\ell / r$ is defined by the following algorithm.

\begin{algorithm}
\caption{Calculate $y = F_{k_i}(x)$} \label{alg:prf-from-ggm}
\begin{algorithmic}[1]
    \Require{$|x| = P_1$, $|k_i| = \ell/r$, and $|y| = P_2$}
    \Ensure{$y = F_{k_i}(x)$}
    \Procedure{}{}
        \For {$1 \leq j \leq P_1$}
            \If {$x[j] = 0$}
                \State $T \gets G_0(k_i)$.
            \Else
                \State $T \gets G_1(k_i)$.
            \EndIf
        \EndFor
        \State $y \gets G'(T)$.
    \EndProcedure
\end{algorithmic}
\end{algorithm}

Since the pseudo-random bits that are previously generated by the algorithm
can not be stored, the amount of memory required is fixed. By analyzing the
above algorithm, the number of pseudo-random bits that the algorithm must
generate to compute one pseudo-random permutation is maximum $(2 P_1 \ell / r
+ P_2) r$ bits. Therefore, the memory size and the time of each Feistel
network are shown in Table \ref{tab:comp-prp-ggm-prf}.

\begin{table*}
\caption{Comparison of PRPs using the GGM pseudo-random function generator}
\label{tab:comp-prp-ggm-prf}
\vs \small \addtolength{\tabcolsep}{12.0pt}
\renewcommand{\arraystretch}{1.4}
\newcommand{\otoprule}{\midrule[0.09em]}
    \begin{tabularx}{6.50in}{lcccc}
    \toprule
        & Balanced & \multicolumn{2}{c}{Unbalanced Feistel network} & Proposed structure \\
        & Feistel network & $kn$:$n$-UFN & $n$:$kn$-UFN & $n$:$kn$-UFN2 \\
    \midrule
    Number of rounds & 3        & $k+2$     & $k+2$     & $2k+1$  \\
    \midrule
    Memory size      & -        & -         & -         & - \\
    Relative ratio   & -        & -         & -         & - \\
    \midrule
    Running time     & $kn\ell$ & $2kn\ell$ & $2n\ell$  & $2n\ell$ \\
    Relative ratio   & $k$      & $2k$      & $2$       & $2$ \\
    \bottomrule
    \end{tabularx}
\end{table*}

It can be seen that the pseudo-random permutation generator using a
target-heavy ($n$:$kn$-UFN) and the proposed structure ($n$:$kn$-UFN2) is the
fastest in the execution time of the entire Feistel network. On the other
hand, it can be seen that the pseudo-random permutation generator using a
source-heavy ($kn$:$n$-UFN) is the slowest.

\chapter{Conclusion}

In this paper, we analyze the minimum number of rounds for a block cipher
from unbalanced Feistel networks to be a pseudo-random permutation generator
which is a safe and efficient block cipher. We also propose a new unbalanced
Feistel network structure that can be extended and analyze the minimum number
of rounds for this structure to be a pseudo-random permutation generator.

The minimum number of rounds for a permutation generator from unbalanced
Feistel networks to be a pseudo-random permutation generator is as follows.
\begin{itemize}
\item In case of a source-heavy unbalanced Feistel network where the source
    block is $kn$ bits and the target block is $n$ bits: If pseudo-random
    functions are used in round functions and the total number of rounds is
    $k+2$ or more, then a source-heavy unbalanced Feistel network is a
    pseudo-random permutation generator.

\item In case of a target-heavy unbalanced Feistel network where the
    source-block is $n$ bits and the target block is $kn$ bits: If
    pseudo-random functions are used in round functions and the total
    number of rounds is more than $k+2$, then a target-heavy unbalanced
    Feistel network is a pseudo-random permutation generator.
\end{itemize}
A newly proposed architecture is an unbalanced Feistel network architecture
that can extend an $n$-bit block cipher to a $(k+1)n$-bit block cipher. The
minimum number of rounds of a permutation generator from this structure to be
pseudo-random is as follows.
\begin{itemize}
\item If $k$ is even, then a newly proposed structure is not a
    pseudo-random permutation generator.

\item If $k$ is odd, pseudo-random functions are used, and the total number
    of rounds is $2k+1$ or more, then a newly proposed structure is a
    pseudo-random permutation generator.
\end{itemize}
In all three structures, the probability that an arbitrary algorithm can
distinguish between an ideal random permutation generator and a pseudo-random
permutation generator is less than $m^2 / 2^n$ after obtaining $m$ plaintext
and ciphertext pairs.
Therefore, when a block cipher is implemented using an unbalanced Feistel
network structure, it can be a safe and efficient block cipher only if the
number of rounds shown in this paper is satisfied.

In this paper, we examined only the condition of the number of rounds for an
unbalanced Feistel network using pseudo-random functions to be pseudo-random.
If a permutation generator from an unbalanced Feistel network is a super
pseudo-random permutation, it becomes a secure block cipher for both chosen
plaintext and chosen ciphertext attacks. Therefore, in the future, it is
necessary to analyze the condition of the round number for a block cipher
from an unbalanced Feistel network to be the super pseudo-random permutation
generator.

\bibliographystyle{plain}
\bibliography{prp-from-ufn}

\end{document}